\documentclass[a4paper,USenglish,numberwithinsect]{lipics-v2018}

\usepackage{microtype}

\graphicspath{{./fig/}}


\title{How Bad is the Freedom to  \textsc{Flood-It}?}



\author{R\'{e}my~Belmonte}
{The University of Electro-Communications, Chofu, Tokyo, Japan}
{remy.belmonte@uec.ac.jp}
{}
{}
\author{Mehdi~Khosravian~Ghadikolaei}
{Université Paris-Dauphine, PSL Research University, CNRS, UMR,\\ LAMSADE, 75016 Paris, France}
{mehdi.khosravian-ghadikolaei@dauphine.fr}
{}
{}
\author{Masashi~Kiyomi}
{Yokohama City University, Yokohama, Japan}
{masashi@yokohama-cu.ac.jp}
{}
{}
\author{Michael~Lampis}
{Université Paris-Dauphine, PSL Research University, CNRS, UMR,\\ LAMSADE, 75016 Paris, France}
{michail.lampis@dauphine.fr}
{}
{}
\author{Yota~Otachi}
{Kumamoto University, Kumamoto, Japan}
{otachi@cs.kumamoto-u.ac.jp}
{0000-0002-0087-853X}
{}

\authorrunning{R. Belmonte, M. Khosravian Ghadikolaei, M. Kiyomi, M. Lampis, and Y. Otachi}

\Copyright{R\'{e}my Belmonte, Mehdi Khosravian Ghadikolaei, Masashi Kiyomi, Michael Lampis, and Yota Otachi}

\subjclass{\ccsdesc[500]{Mathematics of computing~Graph algorithms}, \ccsdesc[500]{Theory of computation~Parameterized complexity and exact algorithms}}

\keywords{flood-filling game, parameterized complexity}

\funding{This work is partially supported by JSPS and MAEDI under the Japan-France Integrated Action Program (SAKURA) Project GRAPA 38593YJ.}

\theoremstyle{plain}
\newtheorem{proposition}[theorem]{Proposition}

\newcommand{\figref}[1]{Figure~\ref{#1}}

\newcommand{\col}{\ensuremath\mathbf{col}}
\newcommand{\comp}{\ensuremath\mathbf{Comp}}

\newcommand{\free}{\textsc{Free-Flood-It}}
\newcommand{\fixed}{\textsc{Fixed-Flood-It}}
\newcommand{\optfree}{\textrm{OPT}_{\textrm{Free}}}
\newcommand{\optfixed}{\textrm{OPT}_{\textrm{Fixed}}}

\newcommand{\sfree}{\textsc{Subset-Free-Flood-It}}
\newcommand{\sfixed}{\textsc{Subset-Fixed-Flood-It}}
\newcommand{\optsfree}{\textrm{OPT}_{\textrm{S-Free}}}
\newcommand{\optsfixed}{\textrm{OPT}_{\textrm{S-Fixed}}}

\newcommand{\setc}{\textsc{Set Cover}}
\newcommand{\Msc}{\textsc{Multi-Colored Set Cover}} 
\newcommand{\msc}{\textsc{MCSC}}


\newcommand{\vc}{\mathsf{vc}}
\newcommand{\nd}{\mathsf{nd}}
\newcommand{\mw}{\mathsf{mw}}
\newcommand{\cw}{\mathsf{cw}}


\newtheorem*{lemma*}{Lemma}
\newtheorem*{theorem*}{Theorem}
\newtheorem*{corollary*}{Corollary}
\newtheorem*{claim*}{Claim}


\usepackage[color=green!20]{todonotes}

\usepackage{algorithm}
\usepackage[noend]{algpseudocode}

\usepackage{color}
\definecolor{darkred}{rgb}{0.8,0,0}


\EventEditors{John Q. Open and Joan R. Acces}
\EventNoEds{2}
\EventLongTitle{42nd Conference on Very Important Topics (CVIT 2016)}
\EventShortTitle{CVIT 2016}
\EventAcronym{CVIT}
\EventYear{2016}
\EventDate{December 24--27, 2016}
\EventLocation{Little Whinging, United Kingdom}
\EventLogo{}
\SeriesVolume{42}
\ArticleNo{23}

\hideLIPIcs
\nolinenumbers


\begin{document}

\maketitle

\begin{abstract}
{\fixed} and {\free} are combinatorial problems on graphs that generalize a
very popular puzzle called \textit{Flood-It}.  Both problems consist of
recoloring moves whose goal is to produce a monochromatic (``flooded'') graph
as quickly as possible. Their difference is that in {\free} the player has the
additional freedom of choosing the vertex to play in each move.  In this paper,
we investigate how this freedom affects the complexity of the problem.  It
turns out that the freedom is \emph{bad} in some sense.  We show that some
cases trivially solvable for {\fixed} become intractable for {\free}.
We also show that some tractable cases for {\fixed} are still tractable for
{\free} but need considerably more involved arguments.
We finally present some combinatorial properties connecting or separating the
two problems.  In particular, we show that the length of an optimal solution
for {\fixed} is always at most twice that of {\free}, and this is tight.
\end{abstract}


\section{Introduction}
\label{sec:intro}
\emph{Flood-It} is a popular puzzle, originally released as a computer game in
2006 by LabPixies (see \cite{CliffordJMS12}). In this game, the player is
presented with (what can be thought of as) a vertex-colored grid graph, with a
designated special \emph{pivot} vertex, usually the top-left corner of the
grid.  In each move, the player has the right to change the color of all
vertices contained in the same monochromatic component as the pivot to a
different color of her choosing. Doing this judiciously gradually increases the
size of the pivot's monochromatic component, until the whole graph is
\emph{flooded} with one color.  The goal is to achieve this flooding with the
minimum number of moves. See \figref{fig:example-fixed} for an example.
\begin{figure}[htb]
  \centering
  \includegraphics[scale=0.8]{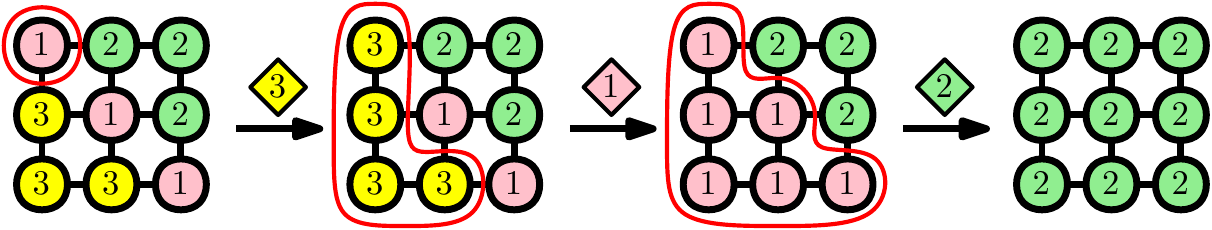}
  \caption{A flooding sequence on a $3 \times 3$ grid.
    Each move in this example changes the color of the top-left monochromatic component.
    Under such a restriction, the depicted sequence is shortest.}
  \label{fig:example-fixed}
\end{figure}

Following the description above, \emph{Flood-It} immediately gives rise to a
natural optimization problem: given a vertex-colored graph, determine the
shortest sequence of flooding moves that wins the game.  This problem has been
extensively studied in the last few years (e.g.
\cite{LagoutteNT14,MeeksS12,MeeksS14,MeeksS13,FukuiOUUU12,FellowsSPS15,SouzaPS14,FellowsPRSS17,MeeksV15arxiv,HonKLLW15arxiv};
a more detailed summary of known results is given below), both because of the
game's popularity (and addictiveness!), but also because the computational
complexity questions associated with this problem have turned out to be
surprisingly deep, and the problem has turned out to be surprisingly
intractable.

The goal of this paper is to add to our understanding of this interesting,
puzzle-inspired, optimization problem, by taking a closer look at the
importance of the \emph{pivot vertex}. As explained above, the classical
version of the game only allows the player to change the color of a special
vertex and its component and has been studied under the name \fixed~\cite{MeeksS12,MeeksS14,MeeksS13}
(or \textsc{Flood-It} in some papers~\cite{CliffordJMS12,SouzaPS14,FellowsSPS15,FellowsPRSS17,HonKLLW15arxiv}).
However, it is extremely natural to also consider a version
where the player is also allowed to play a different vertex of her choosing in
each turn. This has also been well-studied under the name 
\free~\cite{CliffordJMS12,LagoutteNT14,MeeksS12,MeeksS14,MeeksS13,FellowsSPS15,SouzaPS14}.
See \figref{fig:example-free}.
\begin{figure}[htb]
  \centering
  \includegraphics[scale=0.8]{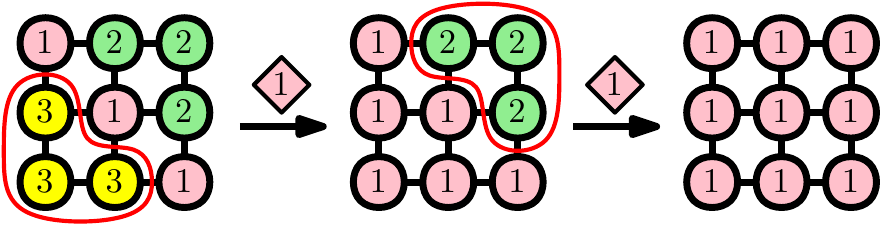}
  \caption{A flooding sequence with no restriction on selected monochromatic components.
    This is shorter than then one in \figref{fig:example-fixed}.}
  \label{fig:example-free}
\end{figure}

Since both versions of this problem have been studied before, the question of
the impact of the pivot vertex on the problem's structure has (at least
implicitly) been considered. Intuitively, one would expect \free\ to be a
harder problem; after all, the player has to choose a color to play \emph{and}
a vertex to play it on, and is hence presented with a larger set of possible
moves. The state of the art seems to confirm this intuition, as
only some of the positive algorithmic results known for \fixed\ are known also
for \free, while there do exist some isolated cases where \fixed\ is tractable
and \free\ is hard, for example co-comparability graphs
\cite{FleischerW12,FukuiOUUU12} and grids of height $2$
\cite{CliffordJMS12,MeeksS13}.  Nevertheless, these results do not completely
pinpoint the added complexity brought by the task of selecting a vertex to
play, as the mentioned algorithms for \fixed\ are already non-trivial, and
hence the jump in complexity is likely to be the result of \emph{the
combination} of the tasks of picking a color and a vertex.  More broadly,
\cite{FellowsSPS15} presented a generic reduction from \fixed\ to \free\ that
preserves a number of crucial parameters (number of colors, optimal value,
etc.) and gives convincing evidence that \free\ is always at least as hard as
\fixed, but not necessarily harder.

\subparagraph*{Our Results} We investigate the complexity of \free, mostly from
the point of view of parameterized complexity,\footnote{For readers unfamiliar
with the basic notions of this field, we refer to standard textbooks
\cite{CyganFKLMPPS15,FlumG06}.} as well as the impact on the combinatorics of
the game of allowing moves outside the pivot.

Our first result is to show that \free\ is W[2]-hard parameterized by the
number of moves in an optimal solution. We recall that for \fixed\ this
parameterization is trivially fixed-parameter tractable: when a player has only
$k$ moves available, then we can safely assume that the graph uses at most
(roughly) $k$ colors, hence one can easily consider all possible solutions in
FPT time. The interest of our result is, therefore, to demonstrate that the
task of deciding which vertex to play next is sufficient to make \free\
significantly harder than \fixed. Indeed, the W[2]-hardness reduction we give,
implies also that \free\ is not solvable in $n^{o(k)}$ time under the ETH. This
tightly matches the complexity of a trivial algorithm which considers all
possible vertices and colors to be played. This is the first concrete example showing a
case where \fixed\ is essentially trivial, but \free\ is intractable.

Motivated by this negative result we consider several other parameterizations
of the problem. We show that {\free} is fixed-parameter tractable when
parameterized by the number of possible moves and the clique-width.  This
result is tight in the sense that the problem is hard when parameterized by
only one of these parameters.  It also implies
the fixed-parameter tractability of the problem parameterized by the number of
colors and the modular-width.
In a similar vein, we present a polynomial kernel when \free\ is parameterized
by the input graph's neighborhood diversity and number of colors.  An analogous
result was shown for \fixed\ in \cite{FellowsPRSS17}, but because of the
freedom to select vertices, several of the tricks used there do not apply to
\free, and our proofs are slightly more involved.  Our previously mentioned
reduction also implies that \free\ does not admit a polynomial kernel
parameterized by vertex cover, under standard assumptions. This result was also
shown for \fixed\ in \cite{FellowsPRSS17}, but it does not follow immediately
for \free, as the reduction of \cite{FellowsSPS15} does not preserve the
graph's vertex cover.

Motivated by the above results, which indicate that the complexity of the
problem can be seriously affected if one allows non-pivot moves, we also study
some more purely combinatorial questions with algorithmic applications. The
main question we pose here is the following. It is obvious that for all
instances the optimal number of moves for \free\ is upper-bounded by the
optimal number of moves for \fixed\ (since the player has strictly more
choices), and it is not hard to construct instances where \fixed\ needs
strictly more moves. Can we bound the optimal number of \fixed\ moves needed as
a function of the optimal number of \fixed\ moves? Somewhat surprisingly, this
extremely natural question does not seem to have been explicitly considered in
the literature before. Here, we completely resolve it by showing that the two
optimal values cannot be more than a factor of $2$ apart, and constructing a
family of simple instances where they are exactly a factor of $2$ apart. As an
immediate application, this gives a $2$-approximation for \free\ for every case
where \fixed\ is known to be tractable. 

We also consider the problem's monotonicity: \fixed\ has the nice property that
even an adversary that selects a single bad move cannot increase the optimal
(that is, in the worst case a bad move is a wasted move). We construct minimal
examples which show that \free\ does not have this nice monotonicity property,
even for extremely simple graphs, that is, making a bad move may not only waste
a move but also make the instance strictly worse.
Such a difference was not explicitly stated in the literature,
while the monotonicity of {\fixed} was seem to be known or at least assumed.
The only result we are aware of is the monotonicity of {\free} on paths
shown by Meeks and Scott~\cite{MeeksS12}.

\paragraph*{Known results} In 2009, the NP-hardness of {\fixed} with six colors
was sketched by Elad Verbin as a comment to a blog post by Sariel
Har-Peled~\cite{Verbin2009blog-comment}.  Independently to the blog comment,
Clifford et al.~\cite{CliffordJMS12} and Fleischer and
Woeginger~\cite{FleischerW12} started investigations of the complexity of the
problem, and published the conference versions of their papers at FUN 2010.
Here we mostly summarize some of the known results on {\free}.  For more
complete lists of previous result, see
e.g.~\cite{FukuiOUUU12,LagoutteNT14,FellowsPRSS17}.

{\free} is NP-hard if the number of colors is at least 3~\cite{CliffordJMS12} even for trees with only one vertex of 
degree more than 2~\cite{LagoutteNT14,FellowsSPS15},
while it is polynomial-time solvable for general graphs 
if the number of colors is at most 2~\cite{CliffordJMS12,MeeksS12,LagoutteNT14}.
Moreover, it is NP-hard even for height-3 grids with four colors~\cite{MeeksS12}.
Note that this result implies that {\free} with a constant number colors is NP-hard
even for graphs of bounded bandwidth.
If the number of colors is unbounded, then
it is NP-hard for height-2 grids~\cite{MeeksS13},
trees of radius 2~\cite{FellowsSPS15}, and,
proper interval graphs and caterpillars~\cite{FukuiOUUU12}.
Also, it is known that there is no constant-factor approximation with a factor independent of the number of colors
unless $\text{P} = \text{NP}$~\cite{CliffordJMS12}.

There are a few positive results on {\free}.
Meeks and Scott~\cite{MeeksS14} showed that
every colored graph has a spanning tree with the same coloring
such that the minimum number of moves coincides in the graph and the spanning tree.
Using this property, they showed that if a graph has only a polynomial number of vertex subsets that induce connected subgraphs,
then {\free} (and {\fixed}) on the graph can be solved in polynomial time.
This in particular implies the polynomial-time solvability on subdivisions of a fixed graph.
It is also known that {\free} for interval graphs and split graphs 
is fixed-parameter tractable when parameterized by the number of colors~\cite{FukuiOUUU12}.


\section{Preliminaries}
\label{sec:pre}

For a positive integer $k$, we use $[k]$ to denote the set $\{1,\ldots,k\}$.
Given a graph $G=(V,E)$, a coloring function $\col\colon V\to [c_{\max}]$, where
$c_{\max}$ is a positive integer, and $u\in V$, we denote by $\comp(\col, u)$
the maximal set of vertices $S$ such that for all $v\in S$, $\col(u)=\col(v)$
and there exists a path from $u$ to $v$ such that for all its internal vertices
$w$ we have $\col(w)=\col(u)$. In other words, $\comp(\col,u)$ is the
monochromatic connected component that contains $u$ under the coloring function
$\col$.

Given $G, \col$, a \emph{move} is defined as a pair $(u,i)$ where $u\in V$,
$i\in[c_{\max}]$.  The \emph{result} of the move $(u,c)$ is a new coloring
function $\col'$ defined as follows: $\col'(v) = c$ for all $v\in
\comp(\col,u)$; $\col'(v) = \col(v)$ for all other vertices. In words, a move
consists of changing the color of $u$, and of all vertices in the same
monochromatic component as $u$, to $c$. Given the above definition we can also
define the result of a sequence of moves $(u_1,c_1), (u_2,c_2),\ldots, (u_k,
c_k)$ on a colored graph with initial coloring function $\col_0$ in the natural
way, that is, for each $i\in [k]$, $\col_i$ is the result of move $(u_i,c_i)$ on
$\col_{i-1}$.

The \free\ problem is defined as follows: given a graph $G=(V,E)$, an integer
$k$, and an initial coloring function $\col_0$, decide if there exists a
sequence of $k$ moves $(u_1,c_1), (u_2,c_2),\ldots, (u_k,c_k)$ such that the
result $\col_k$ obtained by applying this sequence of moves on $\col_0$ is a
constant function (that is, $\forall u,v\in V$ we have $\col_k(u)=\col_k(v)$).

In the \fixed\ problem we are given the same input as in the \free\ problem, as
well as a designated vertex $p\in V$ (the pivot). The question is again if
there exists a sequence of moves such that $\col_k$ is monochromatic, with the
added constraint that we must have $u_i=p$ for all $i\in [k]$.

We denote by $\optfree(G,\col), \optfixed(G,\col,p)$ the minimum $k$ such that
for the input $(G, \col)$ (or $(G,\col, p)$ respectively) the \free\ problem
(respectively the \fixed\ problem) admits a solution.

\subsection{Graph parameters}

\textbf{Vertex cover number:}
A set $S \subseteq V$ is a \emph{vertex cover} of a graph $G= (V,E)$ if each edge in $E$ has at least one end point in $S$.
The minimum size of a vertex cover of a graph is its \emph{vertex cover number}.
By $\vc(G)$, we denote the vertex cover number of $G$.

\textbf{Neighborhood diversity:}
Let $G = (V,E)$ be a graph.
For each vertex $v \in V$, we denote the neighborhood of $v$ by $N_{G}(v)$.
The \emph{closed neighborhood} of $v$ is the set $N_{G}[v] := \{v\} \cup N_{G}(v)$.
We omit the subscript $G$ when the underlying graph is clear from the context.
Two vertices $u,v \in V$ are \emph{true twins} in $G$ if $N[u] = N[v]$
and are \emph{false twins} in $G$ if $N(u) = N(v)$.
Two vertices are \emph{twins} if they are true twins or false twins.
Note that true twins are adjacent and false twins are not.
The \emph{neighborhood diversity} of $G$, denoted $\nd(G)$, is the minimum number $k$
such that $V$ can be partitioned into $k$ sets of twin vertices.
It is known that $\nd(G) \le 2^{\vc(G)} + \vc(G)$ for every graph $G$~\cite{Lampis12}. 
Given a graph, its neighborhood diversity and the corresponding partition into
sets of twins can be computed in polynomial time~\cite{Lampis12}; in fact,
using fast modular decomposition algorithms, the neighborhood diversity of a
graph can be computed in linear time \cite{McConnellS99,TedderCHP08}.

\textbf{Modular-width:}
Let $H$ be a graph with $k \ge 2$ vertices $v_{1}, \dots, v_{k}$,
 and let $H_{1}, \dots, H_{k}$ be $k$ graphs.
The \emph{substitution} $H(H_{1}, \dots, H_{k})$ of the vertices of $H$ by $H_{1}, \dots, H_{k}$ is the graph
with the vertex set $\bigcup_{1 \le i \le k} V(H_{i})$
and the edge set $\bigcup_{1 \le i \le k} V(E_{i}) \cup \{\{u, w\} \mid u \in V(H_{i}), w \in V(H_{j}), \{v_{i}, v_{j}\} \in E(H)\}$.
For each $i$, the set $V(H_{i})$ is called a \emph{module} of $H(H_{1}, \dots, H_{k})$.
The \emph{modular-width} of $G$, denoted $\mw(G)$, is defined recursively as follows:
\begin{itemize}
  \item If $G$ has only one vertex, then $\mw(G) = 1$.
  \item If $G$ is the disjoint union of graphs $G_{1}, \dots, G_{h}$, then $\mw(G) = \max_{1 \le i \le h} \mw(G_{i})$.

  \item If $G$ is a connected graph with two or more vertices, then
  \[
    \mw(G) = \min_{H,H_{1},\dots,H_{|V(H)|}}\max\{|V(H)|, \mw(H_{1}), \dots, \mw(H_{|V(H)|})\},
  \]
  where the minimum is taken over all tuples of graphs $(H,H_{1}, \dots, H_{k})$ such that $G = H(H_{1}, \dots, H_{|V(H)|})$.
\end{itemize}
A recursive substitution structure giving the modular-width can be computed in linear-time~\cite{McConnellS99,TedderCHP08}.
It is known that $\mw(G) \le \nd(G)$ for every graph $G$~\cite{GajarskyLO13}.

\textbf{Clique-width:}
A \emph{$k$-expression} is a rooted binary tree such that
\begin{itemize}
  \setlength{\itemsep}{0pt}
  \item each leaf has label $\circ_{i}$ for some $i \in \{1,\dots,k\}$,
  \item each non-leaf node with two children has label $\cup$, and
  \item each non-leaf node with only one child has label $\rho_{i,j}$ or $\eta_{i,j}$ ($i, j \in \{1,\dots,k\}$, $i \ne j$).
\end{itemize}
Each node in a $k$-expression represents a vertex-labeled graph as follows:
\begin{itemize}
  \setlength{\itemsep}{0pt}
  \item a $\circ_{i}$-node represents a graph with one $i$-vertex;
  \item a $\cup$-node represents the disjoint union of the labeled graphs represented by its children;
  \item a $\rho_{i,j}$-node represents the labeled graph obtained from the one represented by its child by replacing
  the labels of the $i$-vertices with $j$;
  \item an $\eta_{i,j}$-node represents the labeled graph obtained from the one represented by its child
  by adding all possible edges between the $i$-vertices and the $j$-vertices.
\end{itemize}
A $k$-expression represents the graph represented by its root.
The \emph{clique-width} of a graph $G$, denoted by $\cw(G)$,
is the minimum integer $k$ such that there is a $k$-expression representing a graph isomorphic to $G$.
From their definitions, $\cw(G) \le \mw(G)$ holds for every graph $G$.

\figref{fig:width-parameters} shows relationships among the graph parameters introduced above
together with the well-known treewidth and pathwidth
(see \cite{CyganFKLMPPS15} for definitions of these two parameters).
\begin{figure}[htb]
  \centering
  \includegraphics[scale=0.9]{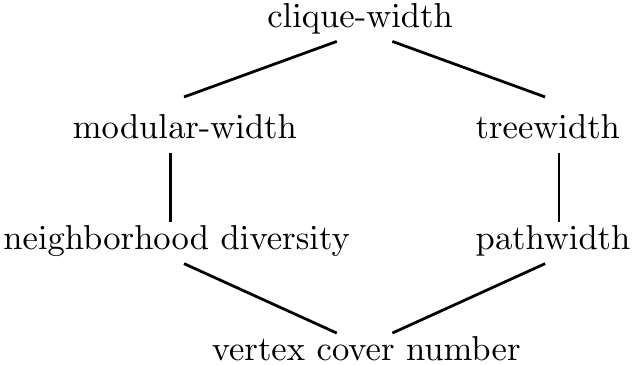}
  \caption{Graph parameters. Each segment implies that the one above is more general than the one below.
  For example, bounded modular-width implies bounded clique-width but not vice versa.}
  \label{fig:width-parameters}
\end{figure}


\section{W[2]-hardness of \free}
\label{sec:whard}

\tikzstyle{vertex}=[circle, draw, inner sep=2.5pt, minimum width=4pt, minimum size=0.1cm]
\tikzstyle{vertex1}=[circle, draw, inner sep=0pt, minimum width=1pt, minimum size=0cm]
\usetikzlibrary{decorations,decorations.pathmorphing,decorations.pathreplacing,fit,positioning}

The main result of this section is that \free\ is W[2]-hard when parameterized
by the minimum length of any valid solution (the natural parameter). The proof
consists of a reduction from \setc, a canonical W[2]-complete problem. 

Before presenting the construction, we recall two basic observations by Meeks and Vu~\cite{MeeksV15arxiv},
both of which rest on the fact that any single move can (at most) eliminate a
single color from the graph, and this can only happen if a color induces a
single component.

\begin{lemma}
[\cite{MeeksV15arxiv}]
\label{lem:col-1}

For any graph $G=(V,E)$, and coloring function $\col$ that uses $c_{\max}$
distinct colors, we have $\optfree(G,\col) \ge c_{\max}-1$. 

\end{lemma}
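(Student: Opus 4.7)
The plan is to track the quantity $\chi(\col) := |\col(V)|$, i.e., the number of distinct colors that actually appear in the coloring, across the sequence of moves, and argue that a single move decreases $\chi$ by at most one. Since the final coloring $\col_k$ is monochromatic we have $\chi(\col_k) = 1$, while $\chi(\col_0) = c_{\max}$ by hypothesis, so at least $c_{\max}-1$ moves are needed.

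The key step is to analyze a single move $(u,c)$ taking $\col$ to $\col'$. By definition $\col'$ agrees with $\col$ outside $\comp(\col,u)$ and equals $c$ on $\comp(\col,u)$. Thus every color appearing in $\col'$ other than $c$ already appeared in $\col$ on some vertex outside $\comp(\col,u)$, and $c$ itself appears in $\col'$. So $\col'(V) \subseteq \col(V) \cup \{c\}$; moreover, the only color that could disappear is $\col(u)$, and this happens precisely when $\comp(\col,u)$ is the unique monochromatic component of color $\col(u)$ in $\col$. In particular the symmetric difference $\col(V) \triangle \col'(V)$ can lose at most one element and gain at most one, so $\chi(\col') \ge \chi(\col) - 1$.

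Combining these two observations gives the bound by a straightforward induction on the length of the move sequence: after $i$ moves we have $\chi(\col_i) \ge \chi(\col_0) - i = c_{\max} - i$, and setting $i = k$ with $\chi(\col_k) = 1$ yields $k \ge c_{\max}-1$.

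There is no real obstacle; the only point requiring attention is the case analysis inside the single-move step, namely checking that no color other than possibly $\col(u)$ can be eliminated (a vertex outside $\comp(\col,u)$ of any other color $c'$ is untouched and still witnesses $c' \in \col'(V)$). Since the claim is attributed to Meeks and Vu, the write-up can be kept very short, essentially the paragraph above.
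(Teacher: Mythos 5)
Your proof is correct and matches the paper's justification: the paper does not spell out a proof but explicitly notes that the lemma "rests on the fact that any single move can (at most) eliminate a single color from the graph," which is exactly the one-color-per-move accounting you formalize. Your induction on the number of distinct colors present is the standard way to make that observation rigorous, so there is nothing to add.
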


%

\begin{lemma}
[\cite{MeeksV15arxiv}]
\label{lem:col} 

For any graph $G=(V,E)$, and coloring function $\col$ that uses $c_{\max}$
distinct colors, such that for all $c\in [c_{\max}]$, $G[\col^{-1}(c)]$
is a disconnected graph, we have $\optfree(G,\col) \ge c_{\max}$.

\end{lemma}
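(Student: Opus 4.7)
The plan is to bootstrap Lemma~\ref{lem:col-1} by exactly one move: I will argue that the very first move of any flooding sequence for $(G,\col)$ cannot reduce the number of distinct colors present in the graph. Once this is established, applying Lemma~\ref{lem:col-1} to the coloring obtained after the first move yields a lower bound of $c_{\max}-1$ on the remaining moves, for a total of at least $c_{\max}$.

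The key observation I would record first is an explicit version of the fact already implicit in Lemma~\ref{lem:col-1}: a move $(u,c')$ replaces the color of the monochromatic component $C := \comp(\col,u)$, of original color $c_0 := \col(u)$, by $c'$. The color $c_0$ disappears from the resulting coloring if and only if $C = \col^{-1}(c_0)$, that is, the color class of $c_0$ consisted of a single connected component. The color $c'$ can never disappear since $C$ is added to it, and no color outside $\{c_0\}$ is touched at all. Because the palette is fixed to $[c_{\max}]$, no new color can be introduced either, so a single move reduces the number of distinct colors used in the graph by at most one.

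I then apply this observation to the first move of any solution for $(G,\col)$. By hypothesis, for every color $c \in [c_{\max}]$ the subgraph $G[\col^{-1}(c)]$ is disconnected, so $\col^{-1}(c)$ is a proper superset of any of its monochromatic components. In particular, $C \subsetneq \col^{-1}(c_0)$, and color $c_0$ still appears (on the vertices of $\col^{-1}(c_0) \setminus C$) in the coloring $\col_1$ produced by the first move; the argument above shows that no other color can vanish either. Hence $\col_1$ still uses all $c_{\max}$ colors, and Lemma~\ref{lem:col-1} applied to $(G,\col_1)$ gives $\optfree(G,\col_1) \ge c_{\max}-1$. Since any solution for $(G,\col)$ consists of one move followed by a solution for $(G,\col_1)$, we conclude $\optfree(G,\col) \ge 1 + (c_{\max}-1) = c_{\max}$.

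The main, and rather mild, obstacle is pinning down cleanly that the first move genuinely cannot eliminate a color, including the degenerate ``no-op'' case $c'=c_0$ where the coloring is unchanged; the disconnectedness hypothesis is exactly what makes this step work, and it is the only ingredient beyond Lemma~\ref{lem:col-1} that we actually use.
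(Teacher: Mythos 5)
The paper does not reprove this lemma (it is cited from Meeks and Vu), but it explicitly states that it rests on the fact that a single move can eliminate at most one color, and only when that color class induces a single component; your proof formalizes exactly that observation and correctly bootstraps Lemma~\ref{lem:col-1} with it. The argument is correct and takes essentially the intended approach.
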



The proof of Theorem \ref{thm:whard} relies on a reduction from a special form
of \setc, which we call \Msc\ (\msc\ for short). \msc\ is defined as follows:

\begin{definition}

In \Msc\ (\msc) we are given as input a set of elements $R$ and $k$ collections
of subsets of $R$, $\mathcal{S}_1,\ldots, \mathcal{S}_k$. We are asked if there
exist $k$ sets $S_1,\ldots, S_k$ such that for all $i\in [k]$,
$S_i\in\mathcal{S}_i$, and $\cup_{i\in[k]} S_i = R$.

\end{definition}

Observe that \msc\ is just a version of \setc\ where the collection of sets is
given to us pre-partitioned into $k$ parts and we are asked to select one set
from each part to form a set cover of the universe. It is not hard to see that
any \setc\ instance $(\mathcal{S},R)$ where we are asked if there exists a set
cover of size $k$ can easily be transformed to an equivalent \msc\ instance
simply by setting $\mathcal{S}_i=\mathcal{S}$ for all $i\in[k]$, since the
definition of \msc\ does not require that the sub-collections $\mathcal{S}_i$
be disjoint. We conclude that known hardness results for \setc\ immediately
transfer to \msc, and in particular \msc\ is W[2]-hard when parameterized by
$k$.

\subsubsection*{Construction}

We are now ready to describe our reduction which, given a \msc\ instance with
universe $R$ and $k$ collections of sets $\mathcal{S}_i, i\in[k]$, produces an
equivalent instance of \free, that is, a graph $G=(V,E)$ and a coloring
function $\col$ on $V$.  We construct this graph as follows:

\begin{itemize}

\item for every set $S\in \mathcal{S}_i$, construct a vertex in $V$.  The set
of vertices in $V$ corresponding to sets of $\mathcal{S}_i$ is denoted by $I_i$
and $\col(v)=i$ for each $v\in I_i$.  $I_1\cup ...\cup I_k$ induces an
independent set colored $\{1,...,k\}$.

\item for each $i\in [k]$, construct $3k$ new vertices, denoted by $L_i$ and
connect all of them to all vertices of $I_i$ such that $L_i\cup I_i$ induces a
complete bipartite graph of size $3k\times |I_i|$. Then set $\col(v)=k+1$ for
each $v\in L_i$, for all $i\in[k]$.

\item for each vertex $v\in L_i$ for $1\leq i\leq k$, construct $k$ new leaf
vertices connected to $v$ with distinct colors $1,...,k$.

\item for each element $e\in R$, construct a vertex $e$. For each $S\in\mathcal{S}_i$ such that $e\in S$ we connect $e$ to the vertex of $I_i$ that represents $S$.

\item add a special vertex $u$ with $\col(u)=k+1$ which is connected it to all vertices in $I_i$ for $i\in[k]$.

\end{itemize} 

An illustration of $G$ is shown in Fig.\ref{Fig: free-flood-it}. In the following we will show that $(G,\col)$ as an instance of \free\ is solvable with at most $2k$ moves if and only if the given \msc\ instance has a set cover of size $k$ which contains one set of each $\mathcal{S}_i$.

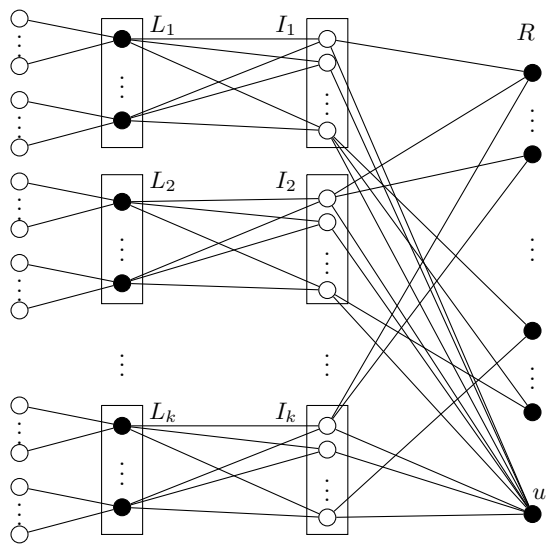
\begin{figure}
\centering
\begin{tikzpicture}[scale=0.9, transform shape]

\node[vertex] (k11) at (0,0) {};
\node[vertex,below of=k11,node distance=0.35cm](k12){};
\node[below of=k12, node distance=0.5 cm] (k1dots) {$\vdots$};
\node[vertex,below of=k1dots,node distance=0.5cm](k1c){};

\node[vertex,below of=k1c,node distance=1cm](k21){};
\node[vertex,below of=k21,node distance=0.35cm](k22){};
\node[below of=k22, node distance=0.5 cm] (k2dots) {$\vdots$};
\node[vertex,below of=k2dots,node distance=0.5cm](k2c){};

\node[below of=k2c, node distance=1 cm] (kdots) {$\vdots$};

\node[vertex,below of=kdots,node distance=1cm](kk1){};
\node[vertex,below of=kk1,node distance=0.35cm](kk2){};
\node[below of=kk2, node distance=0.5 cm] (kkdots) {$\vdots$};
\node[vertex,below of=kkdots,node distance=0.5cm](kkc){};

\node[vertex,fill=black!100] (l11) at (-3,0) {};
\node[below of=l11, node distance=0.6 cm] (l1dots) {$\vdots$};
\node[vertex,fill=black!100,below of=l1dots,node distance=0.6cm](l13k){};

\node[vertex,fill=black!100,below of=l13k,node distance=1.2 cm](l21){};
\node[below of=l21, node distance=0.6 cm] (l2dots) {$\vdots$};
\node[vertex,fill=black!100,below of=l2dots,node distance=0.6cm](l23k){};

\node[below of=l23k, node distance=1.1 cm] (ldots) {$\vdots$};

\node[vertex,fill=black!100,below of=ldots,node distance=1cm](lk1){};
\node[below of=lk1, node distance=0.6 cm] (lkdots) {$\vdots$};
\node[vertex,fill=black!100,below of=lkdots,node distance=0.6cm](lk3k){};


\node[vertex,fill=black!100] (r11) at (3,-0.5) {};
\node[below of=r11, node distance=0.6 cm] (r1dots) {$\vdots$};
\node[vertex,fill=black!100,below of=r1dots,node distance=0.6cm](r13k){};


\node[below of=r13k, node distance=1.3 cm] (rdots) {$\vdots$};

\node[vertex,fill=black!100,below of=rdots,node distance=1.3cm](rm1){};
\node[below of=rm1, node distance=0.6 cm] (rmdots) {$\vdots$};
\node[vertex,fill=black!100,below of=rmdots,node distance=0.6cm](rm3k){};

\node[vertex,fill=black!100,below of=rm3k,node distance=1.5cm](u){};


\node[vertex] (p1l11) at (-4.5,0.3) {};
\node[below of=p1l11, node distance=0.3 cm] (pl11dots) {$\vdots$};
\node[vertex,below of=pl11dots,node distance=0.4cm](pkl11){};

\node[vertex,below of=pkl11,node distance=0.5cm](p1l13k){};
\node[below of=p1l13k, node distance=0.3 cm] (pl1kdots) {$\vdots$};
\node[vertex,below of=pl1kdots,node distance=0.4cm](pkl13k){};

\node[vertex] (p1l21) at (-4.5,-2.1) {};
\node[below of=p1l21, node distance=0.3 cm] (pl21dots) {$\vdots$};
\node[vertex,below of=pl21dots,node distance=0.4cm](pkl21){};

\node[vertex,below of=pkl21,node distance=0.5cm](p1l23k){};
\node[below of=p1l23k, node distance=0.3 cm] (pl2kdots) {$\vdots$};
\node[vertex,below of=pl2kdots,node distance=0.4cm](pkl23k){};

\node[vertex] (p1lk1) at (-4.5,-5.4) {};
\node[below of=p1lk1, node distance=0.3 cm] (plk1dots) {$\vdots$};
\node[vertex,below of=plk1dots,node distance=0.4cm](pklk1){};

\node[vertex,below of=pklk1,node distance=0.5cm](p1lk3k){};
\node[below of=p1lk3k, node distance=0.3 cm] (plkkdots) {$\vdots$};
\node[vertex,below of=plkkdots,node distance=0.4cm](pklk3k){};

\draw (p1l11)--(l11)--(pkl11);
\draw (p1l13k)--(l13k)--(pkl13k);

\draw (p1l21)--(l21)--(pkl21);
\draw (p1l23k)--(l23k)--(pkl23k);

\draw (p1lk1)--(lk1)--(pklk1);
\draw (p1lk3k)--(lk3k)--(pklk3k);

\draw (l11)--(k11)--(l13k);
\draw (l11)--(k12)--(l13k);
\draw (l11)--(k1c)--(l13k);

\draw (l21)--(k21)--(l23k);
\draw (l21)--(k22)--(l23k);
\draw (l21)--(k2c)--(l23k);

\draw (lk1)--(kk1)--(lk3k);
\draw (lk1)--(kk2)--(lk3k);
\draw (lk1)--(kkc)--(lk3k);







\draw (r11)--(k11);
\draw (r11)--(k21);
\draw (r11)--(kk1);
\draw (r13k)--(k21);
\draw (r13k)--(kk1);

\draw (rm1)--(k1c);
\draw (rm1)--(kkc);
\draw (rm3k)--(k1c);
\draw (rm3k)--(k2c);

\draw (u)--(k11);
\draw (u)--(k12);
\draw (u)--(k1c);
\draw (u)--(k21);
\draw (u)--(k22);
\draw (u)--(k2c);
\draw (u)--(kk1);
\draw (u)--(kk2);
\draw (u)--(kkc);



\draw (-0.3,0.3)--(0.3,0.3)--(0.3,-1.6)--(-0.3,-1.6)--(-0.3,0.3);
\draw (-0.3,-2)--(0.3,-2)--(0.3,-3.9)--(-0.3,-3.9)--(-0.3,-2);
\draw (-0.3,-5.4)--(0.3,-5.4)--(0.3,-7.3)--(-0.3,-7.3)--(-0.3,-5.4);

\draw (-3.3,0.3)--(-2.7,0.3)--(-2.7,-1.6)--(-3.3,-1.6)--(-3.3,0.3);
\draw (-3.3,-2)--(-2.7,-2)--(-2.7,-3.9)--(-3.3,-3.9)--(-3.3,-2);
\draw (-3.3,-5.4)--(-2.7,-5.4)--(-2.7,-7.3)--(-3.3,-7.3)--(-3.3,-5.4);



\node () at (-0.6,0.2) {$I_1$};
\node () at (-0.6,-2.1) {$I_2$};
\node () at (-0.6,-5.5) {$I_k$};

\node () at (-2.4,0.2) {$L_1$};
\node () at (-2.4,-2.1) {$L_2$};
\node () at (-2.4,-5.5) {$L_k$};

\node () at (2.9,0.1) {$R$};
\node () at (3.1,-6.7) {$u$};

\end{tikzpicture}
\caption{The graph $G=(V,E)$ of \free\ constructed from the given \msc\ instance. All the vertices in each $I_i$ have color $i$ and all black vertices have color $k+1$. Boxes containing black vertices have size $3k$. Also each vertex in $L_i$ has $k$ neighbors with degree 1 colored $1,...,k$.}\label{Fig: free-flood-it}
\end{figure}

\begin{lemma}\label{lem: forward dir}

If $(\mathcal{S}_1,\ldots,\mathcal{S}_k,R)$ is a YES instance of \msc\ then $\optfree(G,\col)\le 2k$. 

\end{lemma}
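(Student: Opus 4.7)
My plan is to describe an explicit $2k$-move flooding sequence assuming we are given sets $S_1\in\mathcal{S}_1,\ldots,S_k\in\mathcal{S}_k$ with $\bigcup_{i=1}^k S_i=R$. Let $v_i\in I_i$ denote the vertex corresponding to the chosen set $S_i$. I would split the sequence into two phases of $k$ moves each. Phase~1: for $i=1,\ldots,k$, play the move $(v_i,\,k+1)$. Since $v_i\in I_i$ is an isolated vertex in the color class $i$ (recall $I_1\cup\cdots\cup I_k$ is independent), the move merely recolors $v_i$ from $i$ to $k+1$. Phase~2: for $i=1,\ldots,k$, play the move $(u,\,i)$ (after the first move of Phase~2, $u$'s component color changes; each subsequent move recolors the big component to the next color $i$).

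The key claim to verify is that after Phase~1 there is a single monochromatic component $C$ of color $k+1$ which already contains $u$, every $v_i$, every vertex of $L_1\cup\cdots\cup L_k$, and every element vertex of $R$. This uses the following incidences: $u$ is adjacent to every $v_i$; each $v_i$ is adjacent to every vertex of $L_i$; and each $v_i$ is adjacent to exactly the element vertices in $S_i\subseteq R$. Every edge between color-$(k+1)$ vertices listed above is activated after Phase~1, so they coalesce. Finally, the covering property $\bigcup_i S_i=R$ is precisely what guarantees that every element vertex of $R$ lies in $C$.

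After Phase~1, the only components outside $C$ are singletons of colors $1,\ldots,k$, of two kinds: (a) each non-chosen vertex of $I_i$, which is adjacent to every vertex of $L_i\subseteq C$; and (b) each leaf of color $j$ attached to some vertex of $L_i\subseteq C$. In particular, every remaining singleton of color $i$ is adjacent to $C$. Thus in Phase~2, the $i$-th move recolors (the current) $C$ to color $i$ and, as a consequence, immediately swallows all remaining color-$i$ singletons into $C$. After all $k$ moves of Phase~2, no vertex lies outside $C$, so the graph is monochromatic; hence $\optfree(G,\col)\le 2k$.

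The main thing to get right is the bookkeeping of adjacencies at the end of Phase~1, since one must check that \emph{every} surviving color-$i$ singleton actually touches $C$; this is where the $3k$ vertices of $L_i$ (and the fact that their color-$i$ leaves are attached to $L_i\subseteq C$) are used. The rest is mechanical: confirming at each step that the move is well-defined (the target color differs from the current component color) and that the component $C$ only grows.
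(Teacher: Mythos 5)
Your proposal is correct and follows essentially the same route as the paper's proof: first recolor the $k$ chosen set-vertices to $k+1$, observe that the covering property makes all color-$(k+1)$ vertices (including $u$, the $L_i$'s, and $R$) one component that dominates the remaining singletons, then cycle through colors $1,\ldots,k$ from that component. The extra bookkeeping you do on the surviving singletons is just a more explicit version of the paper's domination argument.
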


\begin{proof}

Suppose that there is a solution $S_1,\ldots,S_k$ of the given \msc\ instance,
with $S_i\in\mathcal{S}_i$, for $i\in[k]$ and $\cup_{i\in[k]}S_i = R$. Recall
that for each $S_i$ there is a vertex in $I_i$ in the constructed graph
representing $S_i$. Our first $k$ moves consist of changing the color of each
of these $k$ vertices to $k+1$ in some arbitrary order.

Observe that in the graph resulting after these $k$ moves the vertices with
color $k+1$ form a single connected component: because $\cup S_i$ is a set
cover, all vertices of $R$ have a neighbor with color $k+1$; all vertices with
color $k+1$ in some $I_i$ are in the same component as $u$; and all vertices of
$\cup_{i\in[k]} L_i$ are connected to one of the vertices we played.
Furthermore, observe that this component dominates the graph: all remaining
vertices of $\cup I_i$, as well as all leaves attached to vertices of
$\cup_{i\in[k]} L_i$ are dominated by the vertices of $\cup_{i\in[k]} L_i$.
Hence, we can select an arbitrary vertex with color $k+1$, say $u$, and cycle
through the colors $1,\ldots,k$ on this vertex to make the graph monochromatic.
\end{proof}

Now we establish the converse  of Lemma \ref{lem: forward dir}.

\begin{lemma}\label{lem: backward dir}

If $\optfree(G,\col)\le 2k$, then $(\mathcal{S}_1,\ldots,\mathcal{S}_k,R)$ is a
YES instance of \msc.

\end{lemma}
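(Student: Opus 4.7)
The plan is to extract, from any flooding sequence $\sigma$ of length at most $2k$, a family of vertices $s_1, \ldots, s_k$ with $s_i \in I_i$ whose corresponding sets form a cover of $R$. First, I would apply Lemma~\ref{lem:col} to obtain $\optfree(G,\col) \ge k+1$: each of the $k+1$ colors induces a disconnected subgraph in $G$, since color $i \le k$ appears on the independent set $I_i$ together with many isolated leaves, while color $k+1$ appears on the mutually non-adjacent vertices $u$, $\bigcup_j L_j$, and $R$. This leaves at most $k-1$ moves of slack in our budget of $2k$.

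Next, I would establish a structural claim: for each $i \in [k]$, $\sigma$ contains a \emph{selection move} playing a singleton component $\{s_i\}$ of color $i$ with $s_i \in I_i$ and target color $k+1$. The reasoning rests on the abundance of isolated components: there are $3k^2$ color-$c$ leaves for each $c \le k$ and $3k^2 + |R| + 1$ isolated color-$(k+1)$ vertices initially, so no solution within $2k$ moves can absorb these via individual plays; at some intermediate step a single played component must span all of $\bigcup_j L_j$. Since the $L_j$'s are pairwise non-adjacent and connect only through the $I_j$'s and $u$, such a color-$(k+1)$ component can form only after at least one vertex of $I_j$ has been recolored to $k+1$ for every $j$. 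A tight budget count then shows each of these $k$ recolorings must be a singleton play on an $I_j$-vertex with target $k+1$: any ``consolidation'' alternative (e.g.\ first flipping $u$ or an $L_j$-vertex to color $i$ to merge $I_i$ before flipping it back) costs at least one extra move per $i$, pushing the total strictly beyond $2k$.

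Finally, I would argue that the $s_i$'s correspond to a cover. After the $k$ selection moves, the color-$(k+1)$ component grown around $\{u, s_1, \ldots, s_k\} \cup \bigcup_j L_j$ contains exactly those $e \in R$ adjacent to some $s_i$, i.e.\ those in $\bigcup_i f(s_i)$, where $f(s_i)$ denotes the $R$-neighbors of $s_i$. The remaining $k$ moves are then forced to cycle this big component through the colors $1, \ldots, k$, absorbing at each step the remaining $I_i$-vertices and like-colored leaves. Any $R$-vertex not in $\bigcup_i f(s_i)$ would remain an isolated color-$(k+1)$ singleton throughout, since its only neighbors lie in $\bigcup_i I_i$ with non-matching colors, contradicting the monochromatic end state. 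Hence $\bigcup_i f(s_i) = R$, and the $s_i$'s yield a valid solution of the \msc\ instance.

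The main obstacle will be formalizing the structural step: verifying that alternative solution shapes -- such as playing $R$-vertices directly, using $u$ or an $L_j$-vertex as a consolidation hub, absorbing leaves by flipping $L$-vertices one at a time, or ending at a final color other than the one implicitly assumed -- cannot fit into $2k$ moves. This amounts to a case analysis comparing each deviation from the ``$k$ selections $+$ $k$ cycles'' template against the tight bound and showing that each deviation increases the move count by at least one, so that the budget precisely forces the desired structure and hence the desired cover.
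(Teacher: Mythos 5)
Your proposal extracts the right witnesses (one singleton recoloring of an $I_i$-vertex to color $k+1$ per group) and the overall shape matches the paper's, but the central structural claim is asserted rather than proved, and the route you sketch toward it does not close the gap. The assertion that ``a tight budget count shows each recoloring must be a singleton play on an $I_j$-vertex with target $k+1$, and any consolidation alternative costs at least one extra move per $i$'' is exactly the statement that needs proof; deferring it to ``a case analysis comparing each deviation against the tight bound'' is not a proof, and such a case analysis over all solution shapes is precisely what the construction is designed to avoid. Moreover, two of the intermediate sub-claims you lean on are shaky: (a) ``at some intermediate step a single played component must span all of $\bigcup_j L_j$'' does not follow from the endgame being monochromatic (the $L_j$'s can be absorbed gradually by later moves), and (b) ``any $R$-vertex not adjacent to a selected $I$-vertex would remain an isolated singleton throughout'' is false as stated, since a later move could recolor one of its $I$-neighbors to $k+1$ and absorb it; ruling this out requires an argument, not the assumed rigidity of the last $k$ moves.

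The paper closes both gaps with a single counting device that your proposal is missing: apply Lemma~\ref{lem:col} not to the initial coloring but to the \emph{intermediate} coloring $\col_k$ after the first $k$ moves. If some group $I_i$ has no vertex of color $k+1$ by move $k$, then because $|L_i|=3k$ there are two vertices of $L_i$ untouched (together with their leaves) by the first $k$ moves, so in $\col_k$ every one of the $k+1$ colors still induces a disconnected subgraph and $\optfree(G,\col_k)\ge k+1$, contradicting the remaining budget of $k$. From this one deduces that each color $1,\dots,k$ is played exactly once in the first $k$ moves, hence no color-$(k+1)$ component is ever played early, hence every early move is a singleton play, which yields your structural claim without any case analysis. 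The same lemma applied once more to $\col_k$ forces the color-$(k+1)$ class to be connected at that point, which is what actually certifies that every $R$-vertex is covered. I would recommend reworking your argument around this two-stage use of Lemma~\ref{lem:col}; without it, the ``tight budget'' intuition does not convert into a proof.
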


\begin{proof}

Suppose that there exists a sequence of at most $2k$ moves solving $(G,\col)$.
We can assume without loss of generality that the sequence has length exactly
$2k$, since performing a move on a monochromatic graph keeps the graph
monochromatic.  Let $(u_1,c_1),\ldots,(u_{2k},c_{2k})$ be a solution, let
$\col_0=\col$, and let $\col_i$ denote the coloring of $G$ obtained after the
first $i$ moves.  The key observation that we will rely on is the following:

\medskip 

$(i)$ For all $i\in[k]$, there exist $j\in[k], v\in I_i$ such that
$\col_j(v)=k+1$.

\medskip

In other words, we claim that for each group $I_i$ there exists a vertex that
received color $k+1$ at some point during the first $k$ moves. Before
proceeding, let us prove this claim. Suppose for contradiction that the claim
is false. Then, there exists a group $I_i$ such that no vertex in that group
has color $k+1$ in any of the colorings $\col_0,\ldots,\col_k$. We now consider
the vertices of $L_i$ and their attached leaves. Since $L_i$ contains $3k>k+2$
vertices, there exist two vertices $v_1,v_2$ of $L_i$ such that
$\{u_1,\ldots,u_k\}$ contains neither $v_1,v_2$, nor any of their attached
leaves. In other words, there exist two vertices of $L_i$ on which the winning
sequence does not change colors by playing them or their private neighborhood
directly.  However, since $v_1,v_2$ only have neighbors in $I_1$ (except for
their attached leaves), and no vertex of $I_1$ received color $k+1$, we
conclude that $\col_k(v_1)=\col_k(v_2)=k+1$, that is, the colors of these two
vertices have remained unchanged, and the same is true for their attached
leaves. Consider now the graph $G$ with coloring $\col_k$: we observe that this
coloring uses $k+1$ distinct colors, and that each color induces a disconnected
graph. This is true for colors $1,\ldots,k$ because of the leaves attached to
$v_1,v_2$, and true of color $k+1$ because of $v_1,v_2$ and the fact that no
vertex of $I_i$ has color $k+1$. We conclude that $\optfree(G,\col_k)\ge k+1$
by Lemma \ref{lem:col}, which is a contradiction, because the whole sequence
has length $2k$.

\medskip

Because of claim (i) we can now conclude that for all $i\in[k]$ there exists a
$j\in[k]$ such that $\col_{j-1}(u_j)=i$. In other words, for each color $i$
there exists a move among the first $k$ moves of the solution that played a
vertex which at that point had color $i$. To see that this is true consider
again for contradiction the case that for some $i\in[k]$ this statement does
not hold: this implies that vertices with color $i$ in $\col_0$ still have
color $i$ in $\col_1,\ldots,\col_k$, which means that no vertex of $I_i$ has
received color $k+1$ in the first $k$ moves, contradicting (i).

As a result of the above, we therefore claim that for all $j\in[k]$, we have
$\col_{j-1}(u_j)\neq k+1$. In other words, we claim that none of the first $k$
moves changes the color of a vertex that at that point had color $k+1$. This is
because, as argued, for each of the other $k$ colors, there is a move among the
first $k$ moves that changes a vertex of that color. We therefore conclude that
for all vertices $v$ for which $\col_0(v)=k+1$ we have $\col_j(v)=k+1$ for all
$j\in[k]$. In addition, because in $\col_0$ all colors induce independent sets,
each of the first $k$ moves changes the color of a single vertex. Because of
claim (i), this means that for each $i\in[k]$ one of the first $k$ moves
changes the color of a single vertex from $I_i$ to $k+1$. We select the
corresponding set of $\mathcal{S}_i$ in our \msc\ solution.

We now observe that, since all vertices of $\cup_{i\in[k]} L_i$ retain color
$k+1$ throughout the first $k$ moves, $\col_k$ is a coloring function that uses
$k+1$ distinct colors, and colors $1,\ldots,k$ induce disconnected graphs
(because of the leaves attached to the vertices of each $L_i$). Thanks to Lemma
\ref{lem:col}, this means that $\col_k^{-1}(k+1)$ must induce a connected
graph.  Hence, all vertices of $R$ have a neighbor with color $k+1$ in
$\col_k$, which must be one of the $k$ vertices played in the first $k$ moves;
hence the corresponding element is dominated by our solution and we have a
valid set cover selecting one set from each $\mathcal{S}_i$.  \end{proof}

We are now ready to combine Lemmas \ref{lem: forward dir} and \ref{lem:
backward dir} to obtain the main result of this section. 

\begin{theorem}\label{thm:whard} \free\ is W[2]-hard parameterized by
$\optfree$, that is, parameterized by the length of the optimal solution.
Furthermore, if there is an algorithm that decides if a \free\ instance has a
solution of length $k$ in time $n^{o(k)}$, then the ETH is false. \end{theorem}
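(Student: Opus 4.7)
The plan is to put the two preceding lemmas together to obtain a parameterized reduction from \msc\ to \free\ and then observe that the reduction is tight enough to also transfer the ETH-based lower bound. By Lemmas~\ref{lem: forward dir} and~\ref{lem: backward dir}, an instance $(\mathcal{S}_1,\ldots,\mathcal{S}_k,R)$ of \msc\ is a YES instance if and only if the constructed pair $(G,\col)$ satisfies $\optfree(G,\col)\le 2k$. So, starting from \msc\ parameterized by $k$, the construction produces a \free\ instance whose parameter is exactly $2k$; this is a parameterized reduction, and since (as already argued in the text) \msc\ is W[2]-hard by the trivial reduction from \setc, W[2]-hardness of \free\ parameterized by $\optfree$ follows.

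For the ETH lower bound, I would first check that the reduction is polynomial-time and that it does not blow up the instance size too much. Counting vertices in the construction gives $|V(G)| = O(|\mathcal{S}_1|+\cdots+|\mathcal{S}_k| + k^3 + |R|)$, i.e.\ polynomial in the size $n$ of the input \msc\ instance, and the edges are similarly polynomial. Since the target parameter $2k$ is linear in the source parameter $k$, any algorithm deciding \free\ in time $n^{o(\optfree)}$ would decide the \msc\ (hence \setc) instance in time $n^{o(k)}$ as well.

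The main step I want to cite but not reprove is the ETH-based $n^{o(k)}$ lower bound for \setc\ (equivalently, \textsc{Dominating Set}), which is a standard consequence of the sparsification lemma and known reductions; see for instance the textbook treatment in \cite{CyganFKLMPPS15}. Combining this with the linear-parameter reduction above completes the argument: an $n^{o(k)}$ algorithm for \free\ would yield an $n^{o(k)}$ algorithm for \setc, contradicting ETH.

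The only potential subtlety, and what I view as the main thing to double-check, is that the W[2]-hardness of \msc\ really inherits directly from \setc. This is immediate because setting $\mathcal{S}_i=\mathcal{S}$ for all $i\in[k]$ is a trivial parameterized reduction which preserves both the parameter $k$ and, up to a polynomial factor, the instance size; the same trivial reduction therefore also transports the $n^{o(k)}$ lower bound under ETH from \setc\ to \msc, so plugging into the previous paragraph yields the claimed bound for \free.
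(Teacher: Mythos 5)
Your proposal is correct and follows essentially the same route as the paper: combine Lemmas~\ref{lem: forward dir} and~\ref{lem: backward dir} into a parameterized reduction from \msc\ (W[2]-hard via the trivial reduction from \setc) with the parameter changing only from $k$ to $2k$, and then transfer the ETH-based $n^{o(k)}$ lower bound for \setc/\textsc{Dominating Set} from \cite{CyganFKLMPPS15} through this linear-parameter, polynomial-size reduction. The paper's own proof is a condensed version of exactly this argument.
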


\begin{proof} The described construction, as well as Lemmas \ref{lem: forward
dir} and \ref{lem: backward dir} give a reduction from \msc, which is W[2]-hard
parameterized by $k$, to an instance of \free\ with $k+1$ colors, where the
question is to decide if $\optfree(G,\col)\le 2k$.  Furthermore, it is known
that \msc\ generalizes \textsc{Dominating Set}, which does not admit an
algorithm running in time $n^{o(k)}$, under the ETH \cite{CyganFKLMPPS15}.
Since our reduction only modifies $k$ by a constant, we odtain the same result
for \free.  \end{proof}

We note that because of Lemma \ref{lem:col-1} we can always assume that the
number of colors of a given instance is not much higher than the length of the
optimal solution. As a result, \free\ parameterized by $\optfree$ is equivalent
to the parameterization of \free\ by $\optfree+c_{\max}$ and the result of
Theorem \ref{thm:whard} also applies to this parameterization.

\subsection{Kernel lower bound for {\free}}
As a byproduct of the reduction above, we can show a kernel lower bound for {\free}
parameterized by the vertex cover number.

Let $P$ and $Q$ be parameterized problems.
A polynomial-time computable function $f \colon \Sigma^{*} \times N \to \Sigma^{*} \times N$
is a \emph{polynomial parameter transformation} from $P$ to $Q$
if there is a polynomial $p$ such that for all $(x,k) \in \Sigma^{*} \times N$,
\begin{itemize}
  \item $(x,k) \in P$ if and only if $(x',k') = f(x,k) \in Q$, and
  \item $k' \le p(k)$.
\end{itemize}
If such a function exits, then $P$ is \emph{polynomial parameter reducible} to $Q$.

\begin{proposition}
[\cite{BodlaenderDFH09}]
\label{prop:ppt-poly-kernel}
Let $P$ and $Q$ be parameterized problems,
and $P'$ and $Q'$ be unparameterized versions of $P$ and $Q$, respectively.
Suppose $P'$ is NP-hard, $Q'$ is in NP,
and $P$ is polynomial parameter reducible to $Q$.
If $Q$ has a polynomial kernel, then $P$ also has a polynomial kernel.
\end{proposition}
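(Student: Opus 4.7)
The plan is to prove this classical transfer result by explicitly constructing a polynomial kernelization algorithm for $P$ out of the three ingredients supplied by the hypothesis: the polynomial parameter transformation $f$ from $P$ to $Q$, the assumed polynomial kernelization for $Q$, and a polynomial-time many-one reduction from $Q'$ back to $P'$ that is guaranteed to exist because $P'$ is NP-hard and $Q' \in \mathrm{NP}$.

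First, given an instance $(x,k)$ of $P$, I would apply $f$ to obtain an equivalent instance $(x',k') = f(x,k)$ of $Q$ satisfying $k' \le p(k)$ for some fixed polynomial $p$. Next I would feed $(x',k')$ to the polynomial kernelization algorithm for $Q$, producing an equivalent instance $(y',\ell')$ of $Q$ with $|y'| + \ell' \le q(k')$ for some polynomial $q$; composing the two size bounds yields $|y'| + \ell' \le q(p(k))$, i.e.\ polynomial in $k$. Finally, I would apply a Karp reduction $g \colon Q' \to P'$ to the string $y'$ to obtain an instance $z$ of $P'$ whose size is polynomial in $|y'|$, and hence polynomial in $k$. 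Setting the new parameter to $k'' := |z|$, the pair $(z,k'')$ is produced in polynomial time, has total size polynomial in $k$, and is a YES-instance of $P$ if and only if $(x,k)$ is, by chaining the three equivalences ensured by $f$, by the kernel for $Q$, and by $g$. This is exactly the definition of a polynomial kernel for $P$.

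The only real subtlety, and the step I expect to require the most care, is bridging the parameterized and unparameterized worlds: the Karp reduction $g$ is oblivious to the parameter $\ell'$ and only preserves membership in the decision problem $Q'$, so the output of $g$ does not automatically come equipped with a well-behaved parameter. This is precisely why the assumptions that $P'$ is NP-hard and $Q' \in \mathrm{NP}$ are essential, since they are what guarantee the existence of $g$ in the first place. Once $g$ has been applied, the choice $k'' := |z|$ sidesteps any further parameter bookkeeping, because the size bound established above already shows this trivial parameter is polynomial in $k$, which is all that a polynomial kernel requires.
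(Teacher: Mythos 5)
The paper states this proposition as a known result of Bodlaender et al.\ and gives no proof of its own, so there is nothing internal to compare against; your three-stage pipeline (apply the polynomial parameter transformation $f$, then the assumed polynomial kernel for $Q$, then a Karp reduction from $Q'$ back to $P'$ supplied by the NP-hardness of $P'$ and $Q'\in\mathrm{NP}$) is exactly the standard argument for this transfer result.

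However, the step you yourself single out as the delicate one is where your write-up actually breaks, in two places. First, the Karp reduction $g$ takes as input instances of the \emph{unparameterized} problem $Q'$, so it must be applied to the string encoding the pair $(y',\ell')$ (e.g.\ $y'\#1^{\ell'}$), not to $y'$ alone; discarding $\ell'$ changes the instance. Second, and more seriously, the output $z$ of $g$ is an instance of $P'$, i.e.\ a string that belongs to $P'$ iff it encodes some pair $(z_0,k_0)\in P$; it does not come with a meaningful parameter, and declaring $k'':=|z|$ does not yield an instance of $P$ equivalent to $(x,k)$. There is no reason that $(z,|z|)\in P$ iff $z\in P'$: concretely, if $P$ is \textsc{Vertex Cover} parameterized by solution size, then $(z,|z|)$ is trivially a YES-instance for every graph $z$, so your kernel would answer YES on all inputs. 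The correct final step is to \emph{decode} $z$ as a pair $(z_0,k'')$ according to the encoding that defines $P'$, output $(z_0,k'')$, and output a fixed constant-size NO-instance of $P$ whenever $z$ is not a well-formed encoding; the bound $|z|\le\mathrm{poly}(k)$ then controls both $|z_0|$ and $k''$. With that repair the argument is complete.
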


\begin{theorem}
\label{thm:no-poly-kernel/vc}
{\free} parameterized by the vertex cover number 
admits no polynomial kernel unless $\mathrm{PH} = \Sigma^{\mathrm{p}}_{3}$.
\end{theorem}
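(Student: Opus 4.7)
The plan is to apply Proposition \ref{prop:ppt-poly-kernel} using a polynomial parameter transformation into \free\ from a problem known to have no polynomial kernel. I would use \setc\ parameterized by $|U|+k$ (universe size plus solution size): by the classical incompressibility result of Dom, Lokshtanov and Saurabh, this parameterization admits no polynomial kernel unless $\mathrm{PH}=\Sigma_3^{\mathrm{p}}$. The two side hypotheses of the proposition are easy to dispatch: \setc\ is NP-hard, and \free\ is clearly in NP since a flooding sequence of polynomial length can be verified in polynomial time.

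The transformation itself reuses the construction of Section \ref{sec:whard} almost verbatim. Given a \setc\ instance $(\mathcal{S},U,k)$, first turn it into an \msc\ instance by setting $\mathcal{S}_i := \mathcal{S}$ for every $i \in [k]$ and $R := U$ (this leaves both $k$ and $|R|=|U|$ unchanged), and then apply the gadget construction. Correctness of the composed reduction is immediate from Lemmas \ref{lem: forward dir} and \ref{lem: backward dir}, since both reduction steps preserve yes/no answers. The remaining task is to verify that the new parameter, namely $\vc$ of the produced graph $G$, is polynomially bounded in $|U|+k$.

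For this, I would observe that $X := R \cup L_1 \cup \cdots \cup L_k \cup \{u\}$ is a vertex cover of $G$. Indeed, each edge of the complete bipartite graph between $I_i$ and $L_i$, as well as each pendant leaf attached to a vertex of $L_i$, is covered by $L_i$; each edge between an element vertex of $R$ and a set vertex of some $I_i$ is covered by $R$; and each edge incident to the universal vertex $u$ is covered by $\{u\}$. Since $|L_i|=3k$ for every $i$, this gives $\vc(G)\le |R|+3k^2+1 = O(|U|+k^2)$, which is polynomial in $|U|+k$. Combined with the preceding observation, this yields the desired polynomial parameter transformation, and Proposition \ref{prop:ppt-poly-kernel} closes the argument.

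The only real difficulty lies in picking a source parameterization with a matching lower bound; once \setc\ parameterized by $|U|+k$ has been selected, the structural work is already provided by Section \ref{sec:whard}, and what remains is the short counting argument for $\vc(G)$ above. A minor point to double-check is that the trivial \setc-to-\msc\ step is itself a polynomial parameter transformation, but this is clear since it modifies neither $|U|$ nor $k$.
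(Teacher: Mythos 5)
Your proposal is correct and follows essentially the same route as the paper: both use the Section~\ref{sec:whard} construction as a polynomial parameter transformation from \setc/\msc\ (parameterized by universe size plus solution size, which is incompressible by Dom, Lokshtanov and Saurabh) and invoke Proposition~\ref{prop:ppt-poly-kernel}, observing that $R \cup \bigcup_{i\in[k]} L_i \cup \{u\}$ (the color-$(k{+}1)$ vertices) is a vertex cover of size $O(|R|+k^2)$.
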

\begin{proof}
The reduction in this section can be seen as a polynomial parameter transformation
from {\msc} parameterized by the solution size $k$ and the size $|R|$ of the universe
to {\free} parameterized by the vertex cover number
with a polynomial $p(k, |R|) = 3k^{2} + |R|$.
To see this observe that the black vertices in \figref{Fig: free-flood-it} form a vertex cover of size $3k^{2} + |R|$.

Since {\msc} is NP-hard and the decision version of {\free} is in NP,
Proposition~\ref{prop:ppt-poly-kernel} implies that
if {\free} parameterized by the vertex cover number has a polynomial kernel,
then {\msc} parameterized by $k$ and $|R|$ also has a polynomial kernel.

It is known that {\setc} (and thus {\msc}) parameterized simultaneously by $k$ and $|R|$
does not admit a polynomial kernel unless $\mathrm{PH} = \Sigma^{\mathrm{p}}_{3}$~\cite{DomLS14}.
This completes the proof.
\end{proof}


\section{Clique-width and the number of moves}
\label{sec:clique-width-MSO}
In this section, we consider as a combined parameter for {\free} the length of an optimal solution and the clique-width.
We show that this case is indeed fixed-parameter tractable by using the theory of the monadic second-order logic on graphs.
As an application of this result,
we also show that combined parameterization by the number of colors and the modular-width is fixed-parameter tractable.

To prove the main claim, 
we show that {\free} with a constant length of optimal solutions is an MSO$_{1}$-definable decision problem.
The syntax of MSO$_{1}$ (one-sorted monadic second-order logic) of graphs includes
(i) the logical connectives $\lor$, $\land$, $\lnot$, $\Leftrightarrow$, $\Rightarrow$,
(ii) variables for vertices and vertex sets,
(iii) the quantifiers $\forall$ and $\exists$ applicable to these variables, and
(iv) the following binary relations:
\begin{itemize}
  \item $u \in U$ for a vertex variable $u$ and a vertex set variable $U$;
  \item $\mathbf{adj}(u,v)$ for two vertex variables $u$ and $v$,
  where the interpretation is that $u$ and $v$ are adjacent;
  \item equality of variables.
\end{itemize}
If $G$ models an MSO$_{1}$ formula $\varphi$ with an assignment $X_{1}, \dots, X_{q} \subseteq V(G)$ to the $q$ free variables in $\varphi$,
then we write $\langle G, X_{1}, \dots, X_{q} \rangle \models \varphi$.

It is known that, given a graph of clique-width at most $w$, an MSO$_{1}$ formula $\varphi$,
and an assignment to the free variables in $\varphi$,
the problem of deciding whether $G$ models $\varphi$ with the given assignment
is solvable in time $O(f(\lvert\lvert\varphi\rvert\rvert, w) \cdot n^{3})$,
where $f$ is a computable function
and $\lvert\lvert\varphi\rvert\rvert$ is the length of $\varphi$~\cite{CourcelleMR00,Oum08}.

\begin{theorem}
\label{thm:cwd+steps}
Given an instance $(G, \col)$ of {\free}
such that $G$ has $n$ vertices and clique-width at most $w$,
it can be decided in time $O(f(k, w) \cdot n^{3})$
whether $\optfree(G, \col) \le k$, where $f$ is some computable function.
\end{theorem}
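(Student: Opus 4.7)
The strategy is to invoke the MSO$_{1}$ model-checking metatheorem for bounded-clique-width graphs cited just above. To this end, I will express the property ``$\optfree(G, \col) \le k$'' as an MSO$_{1}$ formula whose length depends only on $k$, with the input coloring supplied as an assignment to free set variables. A $2^{O(w)}$-expression for $G$ can be computed in $O(n^{3})$ time by Oum's approximation algorithm, which accounts for the $n^{3}$ factor in the final running time.

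First, by Lemma~\ref{lem:col-1} I may assume $c_{\max} \le k+1$; if the input violates this, the algorithm immediately returns NO. The coloring is then encoded via free set variables $C_{1}, \dots, C_{c_{\max}}$ with $C_{j} = \col^{-1}(j)$, so there are at most $k+1$ of them, a constant depending only on the parameter $k$.

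The heart of the proof is the construction, for each $k$, of an MSO$_{1}$ formula $\varphi_{k}$ satisfied by $\langle G, C_{1}, \dots, C_{c_{\max}} \rangle$ iff a flooding sequence of length at most $k$ exists. The formula existentially quantifies the $k$ move-vertices $u_{1}, \ldots, u_{k}$ together with the full history of color-class partitions $C^{(i)}_{1}, \dots, C^{(i)}_{c_{\max}}$ for $i = 0, 1, \ldots, k$; this is $O(k^{2})$ set quantifiers. It then asserts: (a) $C^{(0)}_{j} = C_{j}$ for every $j$; (b) each family $\{C^{(i)}_{j}\}_{j}$ partitions $V$; (c) for each $i \in [k]$, the coloring $\col_{i}$ is obtained from $\col_{i-1}$ by a legal move at $u_{i}$; and (d) some $C^{(k)}_{j}$ equals $V$. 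Clause (c) is handled by a disjunction over the (constantly many) pairs $(c, c') \in [c_{\max}]^{2}$ of source and target colors: using an auxiliary quantified set $X$ to play the role of $\comp(\col_{i-1}, u_{i})$, one requires $u_{i} \in X$, $X \subseteq C^{(i-1)}_{c}$, $G[X]$ connected, $X$ maximal with these properties (no vertex of $C^{(i-1)}_{c} \setminus X$ adjacent to $X$), and then $C^{(i)}_{c'} = C^{(i-1)}_{c'} \cup X$, $C^{(i)}_{c} = C^{(i-1)}_{c} \setminus X$, and $C^{(i)}_{c''} = C^{(i-1)}_{c''}$ for every other color $c''$.

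The one nontrivial piece is expressing that $X$ induces a connected subgraph, but this is standard in MSO$_{1}$: $X$ is connected iff every nonempty proper subset $A \subsetneq X$ has an edge to $X \setminus A$, which is a direct second-order formula over the $\mathbf{adj}$ relation. With this subroutine in hand, $\varphi_{k}$ has length bounded by some computable function $g(k)$, independent of $n$ and $w$. Feeding $\varphi_{k}$ together with the approximate $2^{O(w)}$-expression into the MSO$_{1}$ model-checking metatheorem then yields the claimed bound $O(f(k, w) \cdot n^{3})$ for a computable $f$, completing the proof.
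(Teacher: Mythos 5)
Your proposal is correct and follows essentially the same route as the paper: bound $c_{\max}$ by $k+1$ via Lemma~\ref{lem:col-1}, encode the coloring as free set variables, write an MSO$_{1}$ formula of length depending only on $k$ that asserts the existence of a length-$k$ flooding sequence, and invoke the MSO$_{1}$ model-checking metatheorem for graphs of clique-width at most $w$. The only difference is in the encoding details: you existentially quantify the full history of intermediate color classes and assert local transition consistency, whereas the paper defines the predicates ``$u$ has color $c$ after move $i$'' by recursive formula substitution without quantifying the intermediate colorings --- both yield a formula whose length is bounded by a computable function of $k$, so the conclusion is the same.
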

\begin{proof}
Let $V_{i}$ ($1 \le i \le c_{\max}$) be the set of color $i$ vertices in the input graph.
We construct an MSO$_{1}$ formula $\varphi$ with $c_{\max}$ free variables $X_{1}, \dots, X_{c_{\max}}$
such that $\optfree(G, \col) \le k$ if and only if 
$G$ models $\varphi$ with the assignment $X_{i} := V_{i}$ for $1 \le i \le c_{\max}$.
We can define the desired formula $\varphi(X_{1}, \dots, X_{c_{\max}})$ as follows:
\begin{align*}
  \varphi(X_{1}, \dots, X_{c_{\max}}) := 
  \textstyle\bigvee_{1 \le c_{1}, \dots, c_{k} \le k}
  \exists_{v_{1}, v_{2}, \dots, v_{k} \in V(G)} 
  \textstyle\bigvee_{1 \le c \le k}
  \forall_{u \in V(G)} \ \mathbf{color}_{c,k}(u),
\end{align*}
where $\mathbf{color}_{c, i}(u)$ for $0 \le i \le k$
implies that the color of $u$ is $c$ after the moves $(v_{1}, c_{1}), \dots (v_{i}, c_{i})$.

We define $\mathbf{color}_{c, i}(u)$ recursively as follows.
We first set $\mathbf{color}_{c, 0}(u) := (u \in X_{i})$.
This is correct as we assign $V_{i}$ to $X_{i}$.
For $1 \le i \le k$, we set
\begin{align*}
  \mathbf{color}_{c, i}(u)
  =
  \begin{cases}
    \mathbf{color}_{c, i-1}(u) \lor \mathbf{SameCCC}_{i-1}(u, v_{i}) & c = c_{i},
    \\
    \mathbf{color}_{c, i-1}(u) \land \lnot \mathbf{SameCCC}_{i-1}(u, v_{i}) & c \ne c_{i},
  \end{cases}
\end{align*}
where $\mathbf{SameCCC}_{i}(u_{1}, u_{2})$ implies that
$u_{1}$ and $u_{2}$ are in the same monochromatic component after the moves $(v_{1}, c_{1}), \dots (v_{i}, c_{i})$.
The formula precisely represent the recursive nature of the color of the vertices.
That is, a vertex $u$ is of color $c$ after the $i$th move $(v_{i}, c_{i})$ if and only if
either 
its color is changed to $c$ by the $i$ move,
or it was already of color $c$ before the $i$th move and its color is not changed by the $i$ move.

Given that $\mathbf{color}_{c, i}(u)$ is defined for all $c$ and $u$,
defining $\mathbf{SameCCC}_{i}(u_{1}, u_{2})$ is a routine:
\begin{align*}
  \mathbf{SameCCC}_{i}(u_{1}, u_{2})
  =
  \textstyle\bigvee_{1 \le c \le k} \exists_{S \subseteq V(G)} 
  &(u_{1}, u_{2} \in S) \land (\forall_{u \in S}\ \mathbf{color}_{c, i}(u))
  \\
  &\land (\forall_{T \subseteq S}\ (T = \emptyset) \lor \exists_{x \in T} \exists_{y \in S \setminus T}\ \mathbf{adj}(x,y)).
\end{align*}

Since $k \ge c_{\max} -1$ by Lemma~\ref{lem:col-1},
it holds that $\lvert\lvert\varphi\rvert\rvert$ is bounded by a function of $k$.
\end{proof}

\begin{corollary}
\label{cor:mwd+steps}
Given an integer $k$ and an instance $(G, \col)$ of {\free}
such that $G$ has $n$ vertices and modular-width at most $w$,
it can be decided in time $O(f(c_{\max}, w) \cdot n^{3})$
whether $\optfree(G, \col) \le k$, where $f$ is some computable function.
\end{corollary}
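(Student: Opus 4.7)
The plan is to reduce the corollary to Theorem~\ref{thm:cwd+steps} via the inequality $\cw(G) \le \mw(G) \le w$. A direct application of Theorem~\ref{thm:cwd+steps} gives runtime $O(f(k,w) \cdot n^3)$ for deciding $\optfree(G,\col) \le k$, so the main task is to replace the dependence on $k$ by a dependence on $c_{\max}$ alone.

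The key technical ingredient I would establish is: for any \emph{connected} graph $G$ with $\mw(G) \le w$ and at most $c_{\max}$ colors, $\optfree(G,\col) \le g(c_{\max}, w)$ for some computable function $g$. I would proceed by induction on $|V(G)|$, using the top-level modular decomposition $G = H(H_1, \ldots, H_t)$ with $t \le w$ and $H$ connected (since $G$ is). The strategy is to flood each $H_i$ to a chosen color---directly by the induction hypothesis when $H_i$ is connected, and by leveraging the common external neighborhood of its components when $H_i$ is internally disconnected---after which the $t$-vertex quotient $H$ can be finished off in at most $w-1$ further moves.

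With this lemma in hand, the algorithm for the possibly-disconnected input graph $G$ works as follows. First, decompose $G$ into its connected components $G_1, \ldots, G_m$. For each $G_i$ and each target color $c \in [c_{\max}]$, compute $\mathrm{cost}(G_i, c)$, the minimum number of moves to make $G_i$ monochromatic in color $c$, using a slight modification of the MSO$_{1}$ formula from the proof of Theorem~\ref{thm:cwd+steps} with an additional clause forcing the final color to be $c$. Because the lemma guarantees $\mathrm{cost}(G_i, c) \le g(c_{\max}, w) + 1$, the formula length is bounded by a function of $c_{\max}$ and $w$ only, so each model-checking call takes time $O(f(c_{\max}, w) \cdot n_i^3)$ on a graph of clique-width at most $w$. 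The answer is YES iff $\min_{c \in [c_{\max}]} \sum_i \mathrm{cost}(G_i, c) \le k$, which yields the claimed total running time of $O(f'(c_{\max}, w) \cdot n^3)$.

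The main obstacle will be the key lemma, specifically its inductive step when some sub-module $H_i$ is internally disconnected. A naive induction would charge one copy of $g$ per component of $H_i$, producing a bound depending on $n$. Instead, I would first flood an adjacent module $H_j$ (with $\{v_i, v_j\} \in E(H)$) to some color appearing in $H_i$, and then use $H_j$ as a ``bridge'' to absorb the components of $H_i$ by cycling the current color of the bridging component through the colors present in $H_i$, costing only $O(c_{\max})$ extra moves per such module. Making this precise, while ensuring that operations on one module do not disrupt the intended state of the others, is the delicate part of the argument.
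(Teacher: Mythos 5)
Your overall architecture is the right one and matches the paper's: use $\cw(G)\le\mw(G)\le w$, bound $\optfree(G,\col)$ for a connected instance by a function of $c_{\max}$ and $w$ alone, and then invoke Theorem~\ref{thm:cwd+steps} with $k$ replaced by this bound. The extra care you take for disconnected inputs (per-component, per-target-color costs combined as $\min_{c}\sum_i \mathrm{cost}(G_i,c)$) is sound and in fact more explicit than the paper, which tacitly assumes connectivity. The problem lies in your proof plan for the key lemma.

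The gap is that your induction on $|V(G)|$ cannot close with a bound $g(c_{\max},w)$ independent of $n$. In the inductive step you flood each module $H_i$ ``directly by the induction hypothesis,'' which costs up to $g(c_{\max},\mw(H_i))$ moves, and $\mw(H_i)$ can be as large as $w$ itself (the induction is on the number of vertices, not on the width). So even a single module handled recursively, plus any positive number of additional moves, forces $g(c_{\max},w)\ge g(c_{\max},w)+1$; with $t\le w$ modules you would need $g\ge t\cdot g+(w-1)$. You correctly sense the danger of a per-component blow-up for internally disconnected modules, but the same blow-up already occurs for connected ones: any recursion that pays one full recursive cost per level of a modular decomposition of unbounded depth cannot yield an $n$-independent bound. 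The fix is to not flood any module at all. The paper's argument is direct: pick one vertex $v$ in some module, recolor one vertex of each of the other at most $w-1$ modules to $\col(v)$; since the quotient $H$ is connected, these at most $w$ vertices now form a connected, dominating, monochromatic set, and cycling through the remaining $c_{\max}-1$ colors on it floods $G$. Hence $\optfree(G,\col)\le w+c_{\max}-2$ for connected $G$, with no recursion --- this is essentially your ``bridge'' idea applied uniformly to every module, and with it the rest of your plan goes through.
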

\begin{proof}
Observe that for every connected graph $G$ of modular-width at most $w$,
it holds that $\optfree(G, \col) \le w + c_{\max} - 2$:
we pick one vertex $v$ from a module $M$;
we next color one vertex in each module except $M$ with $\col(v)$;
we then play at $v$ with the remaining $c_{\max} - 1$ colors.
Thus we can assume that $k \le w + c_{\max} - 2$.
Since the modular-width of a graph is at most its clique-width by their definitions,
Theorem~\ref{thm:cwd+steps} gives an $O(g(w+c_{\max}, w) \cdot n^{3})$-time algorithm for some computable $g$,
which can be seen as an $O(f(c_{\max}, w) \cdot n^{3})$-time algorithm for some computable $f$.
\end{proof}


\section{Neighborhood diversity and the number of colors}
\label{sec:nd-kernel}
Since the modular-width of a graph is upper bounded by its neighborhood diversity,
Corollary~\ref{cor:mwd+steps} in the previous section implies that
{\free} is fixed-parameter tractable when parameterized by both the neighborhood diversity and the number of colors.
Here we show that {\free} admits a polynomial kernel with the same parameterization.
This section is devoted to a proof of the following theorem.
\begin{theorem}
\label{thm:nd+cmax_poly-kernel}
{\free} admits a kernel of at most $\nd(G) \cdot c_{\max} \cdot (\nd(G) + c_{\max} - 1)$ vertices.
\end{theorem}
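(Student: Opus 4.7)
The plan is to apply the natural reduction rule of capping each twin--color class, and then prove that the cap $\nd(G)+c_{\max}-1$ is safe. Concretely, I would first compute the partition $T_{1},\dots,T_{\nd(G)}$ of $V(G)$ into twin classes (in linear time, as mentioned earlier in the Preliminaries). The pairs $(T_{i},c)$ with $c\in[c_{\max}]$ give at most $\nd(G)\cdot c_{\max}$ \emph{twin--color classes} whose union partitions $V(G)$. The reduction rule is: while some twin--color class $T_{i}\cap\col^{-1}(c)$ has size strictly greater than $\nd(G)+c_{\max}-1$, delete an arbitrary vertex $v$ from it. After exhaustive application, the vertex count is bounded by $\nd(G)\cdot c_{\max}\cdot(\nd(G)+c_{\max}-1)$, as claimed.

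The substance of the proof is showing that each deletion preserves $\optfree$. For the direction $\optfree(G-v)\le\optfree(G)$, I would start from an optimal sequence $\sigma$ for $G$ and rewrite it to avoid playing at $v$: because $v$ has a same-color twin $v'\in T_{i}$, any move $(v,c')$ for which $v$'s component has more than one vertex can be replaced by $(w,c')$ for some $w$ in the same component, and any move whose component is the singleton $\{v\}$ can simply be skipped without affecting the final restricted coloring of $G-v$. The rewriting is legitimate because, by the twin property, removing $v$ does not disconnect any monochromatic component that contains $v$ (any path through $v$ can be rerouted through $v'$), so the remaining moves act on $G-v$ exactly as their restrictions. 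For the converse direction $\optfree(G)\le\optfree(G-v)$, the plan is to first invoke the bound $\optfree(G,\col)\le\nd(G)+c_{\max}-2$ for connected $G$, obtained exactly as in the proof of Corollary~\ref{cor:mwd+steps} via $\mw(G)\le\nd(G)$, so that we may restrict attention to solutions of that length. Then, given an optimal $\sigma$ for $G-v$, I would simulate $\sigma$ on $G$ and show by induction that the color of $v$ always equals the color of some fixed same-color twin $v'\in T_{i}\cap\col^{-1}(c)$ in the simulation.

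The true-twin case of this invariant is immediate, since $v$ and $v'$ are adjacent, start at the same color, and therefore remain in the same monochromatic component throughout; every move affects them simultaneously. The main obstacle is the false-twin case, where $v$ and $v'$ can both be isolated singletons of color $c$, and a move in $\sigma$ might recolor $v'$ alone (as a singleton) and so desynchronize the two. Here I would exploit the cap: if no common neighbor of $T_{i}$ ever acquires color $c$ during the execution, then each of the $\ge\nd(G)+c_{\max}$ vertices of $T_{i}\cap\col^{-1}(c)$ remains isolated and must be recolored by an individual move, which already exceeds the $\nd(G)+c_{\max}-2$ move budget of $\sigma$ (using Lemma~\ref{lem:col-1} to argue all $c_{\max}-1$ other colors also consume moves). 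Consequently some ``absorbing'' event must occur, after which all $T_{i}\cap\col^{-1}(c)$ vertices---including $v$---belong to the same larger monochromatic component and thereafter evolve in lockstep with $v'$. Making this budget argument precise, and handling the transition moments where the singleton component of $v$ and $v'$ merges with the rest of the graph, is the delicate part; once the invariant is established, the fact that $\sigma$ floods $G-v$ immediately entails that the same sequence floods $G$, completing the proof.
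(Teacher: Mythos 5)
Your overall strategy coincides with the paper's: cap each twin--color class at $\nd(G)+c_{\max}-1$, use the bound $\optfree(G,\col)\le \nd(G)+c_{\max}-2$ for connected graphs to argue that optimal sequences are short, and establish a synchronization invariant between the deleted vertex and a surviving twin. Your second direction ($\optfree(G)\le\optfree(G-v)$) is essentially the paper's argument: the counting guarantees a twin $v'$ that $\sigma$ never plays, and since neither $v$ nor $v'$ is ever played directly, both remain color-$c$ singletons until a common neighbor acquires color $c$, at which point they are absorbed simultaneously and evolve in lockstep thereafter (this is the paper's Lemma~\ref{lem:unplayed-false-twins}).

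The first direction, however, has a genuine gap. You claim that a move played on the singleton component $\{v\}$ ``can simply be skipped without affecting the final restricted coloring of $G-v$,'' justified by rerouting paths through $v'$. Both halves of this are unsound. Recoloring the singleton $\{v\}$ to some color $c'$ can merge two neighboring monochromatic components of color $c'$ that are connected \emph{only} through $v$ (e.g.\ in $K_{2,2}$ with $\col(v)=\col(v')=1$, $\col(a)=\col(b)=2$: the move $(v,2)$ creates the component $\{a,v,b\}$, so the next move recolors $b$ as well, whereas after skipping it in $G-v$ the move on $a$ touches only $a$). So skipping the move does change how subsequent moves act on $V\setminus\{v\}$. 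The rerouting through $v'$ does not repair this, because it requires $v'$ to have the same color as $v$ at that moment, and nothing in your rewriting maintains that --- indeed the whole point of the skipped move was that $v$ changed color while $v'$ did not. The fix is to abandon the rewriting and use the same counting as in your other direction: an optimal sequence for $G$ has length $k\le\nd(G)+c_{\max}-2$, while the class has at least $\nd(G)+c_{\max}\ge k+2$ vertices, so two of them, say $x$ and $y$, are never played; the unplayed-twins equivalence then shows the sequence is valid for $G-x$, and since all same-colored false twins in the class are interchangeable, $\optfree(G-v)=\optfree(G-x)\le\optfree(G)$. This is exactly how the paper closes this direction.
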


Fellows et al.~\cite{FellowsPRSS17} observed that for {\fixed}, 
a polynomial kernel with the same parameterization can be easily obtained
since twin vertices of the same color can be safely contracted.
In {\free}, this is true for true twins but not for false twins.
See \figref{fig:false-twin}.
\begin{figure}[tb]
  \centering
  \begin{minipage}[b]{0.45\linewidth}
    \centering
    \includegraphics[scale=0.7]{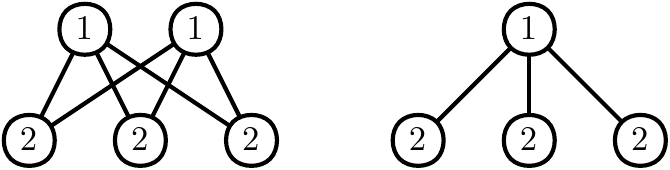}
    \subcaption{Removing a false twin is not safe.}\label{fig:false-twin}
  \end{minipage}
  \begin{minipage}[b]{0.45\linewidth}
    \centering
    \includegraphics[scale=0.7]{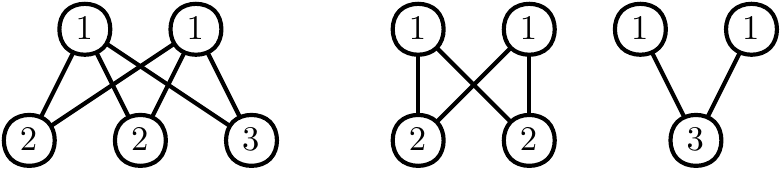}
    \subcaption{Removing a twin color is not safe.}\label{fig:twin-colors}
  \end{minipage}
  \caption{Simple reductions that do not work for {\free}.}
\end{figure}

Though it might be still possible to show something like 
``if there are more than some constant number of false twins with the same color,
then one can remove one of them without changing the minimum number of moves,''
here we show a weaker claim. Our reduction rules are as follows:
\begin{itemize}
  \item \noindent Rule \textit{TT}:
  Let $u$ and $v$ be true twins of the same color in $(G,\col)$.
  Remove $v$.

  \item \noindent Rule \textit{FT}:
  Let $F$ be a set of false-twin vertices of the same color in $(G,\col)$
  such that $|F| = \nd(G)  + c_{\max}$.
  Remove arbitrary one vertex in $F$.
\end{itemize}

Observe that after applying TT and FT exhaustively in polynomial time,
the obtained graph can have at most $\nd(G) \cdot c_{\max} \cdot (\nd(G) + c_{\max} - 1)$ vertices.
This is because each set of twin vertices can contain at most $\nd(G) + c_{\max}-1$ vertices.
Hence, to prove Theorem~\ref{thm:nd+cmax_poly-kernel}, it suffices to show the safeness of the rules.

\begin{lemma}
Rule TT is safe.
\end{lemma}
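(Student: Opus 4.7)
The plan is to prove $\optfree(G', \col') = \optfree(G, \col)$, where $(G', \col')$ denotes the instance obtained by deleting $v$. The central structural observation, which I would establish first by induction on the number of moves, is that for every sequence of moves starting from $\col$, the vertices $u$ and $v$ retain the same color at every step and therefore always lie in the same monochromatic component. The base case is immediate since $\col(u)=\col(v)$; for the inductive step, because $N[u] = N[v]$ we have $u \in \comp(\col_i, w) \iff v \in \comp(\col_i, w)$ for any vertex $w$, so every move either recolors both $u$ and $v$ together or leaves both untouched.

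For the inequality $\optfree(G', \col') \le \optfree(G, \col)$, I would take an optimal sequence $\sigma$ for $(G, \col)$ and build a sequence $\sigma'$ of the same length for $(G', \col')$ by replacing each move of the form $(v, c)$ with $(u, c)$ and leaving all other moves unchanged. The invariant above guarantees that $(u, c)$ and $(v, c)$ act on the same monochromatic component in $G$; moreover, removing $v$ does not alter the monochromatic-component structure on the remaining vertices, because $u$ has identical closed neighborhood and identical color and so provides any ``bridge'' that $v$ would. An easy induction then shows that the coloring after step $i$ of $\sigma'$ on $G'$ is exactly the restriction to $V(G')$ of the coloring after step $i$ of $\sigma$ on $G$; in particular, if the latter is monochromatic, so is the former. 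The reverse inequality $\optfree(G, \col) \le \optfree(G', \col')$ is easier: play an optimal sequence for $(G', \col')$ verbatim on $(G, \col)$; by the invariant $v$ always shares $u$'s color, so once the coloring on $V(G')$ is constant, the full coloring on $V(G)$ is also constant.

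The main (mild) obstacle is the bookkeeping in the first direction, namely carefully verifying that the monochromatic components of $G$ restricted to $V(G')$ coincide at every step with those of $G'$; this reduces to the trivial fact that $u$ can replace $v$ in any monochromatic connecting path because $u$ and $v$ have the same color and the same neighbors. Everything else is a routine induction on the length of the move sequence driven by the twin-preservation invariant.
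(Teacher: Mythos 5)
Your proof is correct and rests on the same core observation as the paper's (much terser) argument: since $u$ and $v$ are adjacent true twins of the same color, they lie in the same monochromatic component at every step, so deleting $v$ --- which the paper phrases as contracting the edge $\{u,v\}$ --- cannot change the optimum. You have simply spelled out the induction and move-replacement details that the paper leaves implicit.
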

\begin{proof}
  Let $u$ and $v$ be true twins of the same color.
  Observe that removing $v$ is equivalent to contracting the edge $\{u,v\}$.
  Since $u$ and $v$ are in the same monochromatic component,
  the lemma holds.
\end{proof}

To guarantees the safeness of FT, we need the following technical lemmas.
\begin{lemma}
\label{lem:unplayed-false-twins}
Let $(G,\col)$ be an instance of {\free}
and $x, y \in V(G)$ be false-twin vertices of the same color $c$.
A sequence $(u_{1}, c_{1}), \dots, (u_{k}, c_{k})$ with $u_{i} \notin \{x,y\}$ for $1 \le i \le k$
is a valid flooding sequence for $(G,\col)$ if and only if it is a valid flooding sequence for $(G-x,\col|_{G-x})$.
\end{lemma}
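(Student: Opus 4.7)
The plan is to prove two invariants by induction on the step index $i$. Letting $\col_i^G$ and $\col_i^{G-x}$ denote the colorings obtained by applying the first $i$ moves of the sequence to $(G,\col)$ and to $(G-x,\col|_{G-x})$ respectively, I will establish (I1) $\col_i^G(x) = \col_i^G(y)$, and (I2) $\col_i^G$ and $\col_i^{G-x}$ agree on $V(G) \setminus \{x\}$. With these invariants at step $k$, the lemma follows at once: the sequence floods $G-x$ iff $\col_k^{G-x}$ is constant, and by (I2) this is equivalent to $\col_k^G$ being constant on $V(G)\setminus\{x\}$; since $y\in V(G)\setminus\{x\}$, (I1) forces $\col_k^G(x)$ to match the common color, so $\col_k^G$ is also constant on $V(G)$.

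The base case $i=0$ holds by hypothesis. For the inductive step with move $(u_i,c_i)$, $u_i \notin \{x,y\}$, the central claim compares $C := \comp(\col_{i-1}^G, u_i)$ and $C' := \comp(\col_{i-1}^{G-x}, u_i)$: I will show $C\setminus\{x\} = C'$ and $x \in C \iff y \in C$. These together preserve both invariants, since vertices of $V(G)\setminus\{x\}$ are recolored identically in $G$ and $G-x$, while $x$ and $y$ are either both recolored (to $c_i$) or neither is.

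The inclusion $C' \subseteq C \setminus \{x\}$ is immediate, since by (I2) any monochromatic path witnessing $v \in C'$ in $G-x$ is also such a path in $G$. The reverse inclusion, and the equivalence $x \in C \iff y\in C$, both rest on the same substitution trick: take a shortest monochromatic path in $G$ (from $u_i$ to $v \neq x$, or to $x$ itself) and, if $x$ appears on it, swap it with $y$. This is sound because $N(x)=N(y)$, so the neighbors of $x$ along the path are also neighbors of $y$, and because $\col_{i-1}^G(y) = \col_{i-1}^G(x)$ by (I1).

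I expect the main obstacle to be the case in which $y$ already lies on the chosen shortest path, since a naive swap would produce a repeated vertex. The crucial observation is that $x$ and $y$ are non-adjacent (false twins under the open-neighborhood definition), so $y$ cannot appear immediately adjacent to $x$ along the path; the two occurrences must be separated by at least one intermediate vertex, which then lets me either shortcut through a common neighbor (contradicting minimality of the path) or splice the path to bypass $x$ without re-entering $y$. Once this delicate case is handled, the inductive step closes and both invariants propagate to step $k$, completing the proof.
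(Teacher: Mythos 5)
Your proof is correct, but it is organized quite differently from the paper's. The paper first disposes of the case where some common neighbor of $x$ and $y$ already has color $c$, and in the remaining case isolates the \emph{first} move that recolors a neighbor of $y$ to $c$: before that move the two instances evolve identically except for the extra singleton component $\{x\}$, at that move $x$ and $y$ are absorbed together into $\comp(\col_i,u_i)$ (respectively, only $y$ in $G-x$), and afterwards the two configurations are declared ``equivalent'' since the merged components have the same adjacent monochromatic components, so the tail of the sequence acts identically. You instead run a uniform induction over \emph{every} move, carrying the two invariants $\col_i^G(x)=\col_i^G(y)$ and $\col_i^G=\col_i^{G-x}$ on $V(G)\setminus\{x\}$, and prove the per-move claims $\comp(\col_{i-1}^G,u_i)\setminus\{x\}=\comp(\col_{i-1}^{G-x},u_i)$ and $x\in\comp(\col_{i-1}^G,u_i)\Leftrightarrow y\in\comp(\col_{i-1}^G,u_i)$ by rerouting shortest monochromatic paths through $y$ using $N(x)=N(y)$. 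What your route buys is that no case split on the initial coloring and no identification of a critical move are needed, and the informal ``the remaining of the sequence has the same effect'' step of the paper is replaced by an explicit invariant that is verified at every step; the cost is the path-surgery case analysis (in particular the case where $y$ already lies on the chosen path), which you handle correctly: non-adjacency of the false twins forces the two occurrences apart, and shortest-path minimality then yields either a contradiction or a valid bypass. Both arguments ultimately rest on the same structural fact that $x$ and $y$ always share a color and have identical neighborhoods, so once $y$'s membership in a component is witnessed, $x$'s is too.
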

\begin{proof}
Let $(G',\col') = (G-x,\col|_{G-x})$.
If a neighbor of $x$ and $y$ has color $c$, then the lemma trivially holds.
Hence, in what follows, we assume that none of the vertices adjacent to $x$ and $y$ has color $c$.
Assume that $(u_{1}, c_{1}), \dots, (u_{k}, c_{k})$ is valid for at least one of $(G,\col)$ and $(G',\col')$.
Then there is a move that changes the color of a neighbor of $y$ to $c$ since $u_{i} \ne y$ for $1 \le i \le k$.
Let $(u_{i}, c_{i})$ be the first such move.

The first part $(u_{1}, c_{1}), \dots, (u_{i-1}, c_{i-1})$ of the sequence 
has the same effect to $(G,\col)$ and $(G',\col')$.
That is, the monochromatic components and connection among them
are the same in $(G,\col_{i-1})$ and $(G',\col'_{i-1})$ except that $(G,\col_{i-1})$ contains the monochromatic component $\{x\}$.
Note that $\{y\}$ is a monochromatic component of color $c$ in both $(G,\col_{i-1})$ and $(G',\col'_{i-1})$,
and $\{y\}$ and $\{x\}$ have the same adjacent monochromatic components in $(G,\col_{i-1})$.

Let $\mathcal{C}$ be the set of monochromatic components of color $c$ in $(G, \col_{i-1})$ that
are adjacent to $\comp(\col_{i-1}, u_{i})$.
Similarly, let $\mathcal{C}'$ be the set of monochromatic components of color $c$ in $(G', \col'_{i-1})$ that
are adjacent to $\comp(\col'_{i-1}, u_{i})$.
Observe that $\comp(\col_{i-1}, u_{i}) = \comp(\col'_{i-1}, u_{i})$,
$\mathcal{C}' = \mathcal{C} \setminus \{\{x\}\}$, and
$\{y\} \in \mathcal{C} \cap \mathcal{C}'$.

Now we apply the move $(u_{i}, c_{i})$.
It follows that $\comp(\col_{i}, u_{i}) = \comp(\col_{i-1}, u_{i}) \cup \bigcup_{C \in \mathcal{C}} C$
and $\comp(\col'_{i}, u_{i}) = \comp(\col'_{i-1}, u_{i}) \cup \bigcup_{C' \in \mathcal{C'}} C'$.
Since $\comp(\col_{i}, u_{i}) \setminus \comp(\col'_{i}, u_{i}) = \{x\}$
and both $\comp(\col_{i}, u_{i})$ and $\comp(\col'_{i}, u_{i})$ include $y$,
the components $\comp(\col_{i}, u_{i})$ and $\comp(\col'_{i}, u_{i})$
have the same adjacent monochromatic components.
Also, for each $u \notin \comp(\col_{i}, u_{i})$,
$\comp(\col_{i-1}, u) = \comp(\col'_{i-1}, u)$ holds.

This implies that $(G, \col_{i})$ and $(G', \col'_{i})$ are equivalent
and thus the remaining of the sequence, i.e.\ $(u_{i+1}, c_{i+1}), \dots, (u_{k}, c_{k})$, 
has the same effect to $(G',\col'_{i})$ and $(G,\col_{i})$.
Therefore, $(G,\col_{k})$ is constant if and only if so is $(G',\col'_{k})$.
\end{proof}

\begin{lemma}
Let $S$ be a set of false-twin vertices with the same color in $(G, \col)$.
If $|S| \ge \optfree(G,\col) + 2$, then $\optfree(G,\col) = \optfree(G-x,\col|_{G-x})$ for every $x \in S$.
\end{lemma}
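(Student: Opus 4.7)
The plan is to establish $\optfree(G, \col) = \optfree(G-x, \col|_{G-x})$ by proving both inequalities with the aid of Lemma~\ref{lem:unplayed-false-twins} together with a routine isomorphism argument. Write $k = \optfree(G,\col)$ and $k' = \optfree(G-x, \col|_{G-x})$.

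For the inequality $k' \le k$, I would take an optimal flooding sequence for $(G, \col)$ of length $k$. This sequence touches at most $k$ distinct vertices, so from $|S| \ge k + 2$ there exist two distinct vertices $y_{1}, y_{2} \in S$ that never appear as played vertices. Since $y_{1}$ and $y_{2}$ are false twins of the same color, Lemma~\ref{lem:unplayed-false-twins} applied to the pair $(y_{1}, y_{2})$ shows that the same sequence is a valid flooding sequence for $(G - y_{1}, \col|_{G - y_{1}})$, whence $\optfree(G - y_{1}, \col|_{G - y_{1}}) \le k$. Since $x$ and $y_{1}$ lie in $S$, they share both color and open neighborhood, so the bijection that swaps $x$ and $y_{1}$ and fixes all other vertices is a color-preserving graph isomorphism between $(G - x, \col|_{G-x})$ and $(G - y_{1}, \col|_{G - y_{1}})$; hence $k' \le k$.

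For the reverse inequality $k \le k'$, I would take an optimal flooding sequence for $(G-x, \col|_{G-x})$ of length $k'$. By construction, no move in this sequence plays on $x$, and from the previous step we have $k' \le k$, so $|S \setminus \{x\}| = |S| - 1 \ge k + 1 \ge k' + 1$. Consequently at least one $y \in S \setminus \{x\}$ is never played in the sequence. Applying Lemma~\ref{lem:unplayed-false-twins} now in the opposite direction to the false-twin pair $(x, y)$ in $(G, \col)$, the same sequence is equally valid for $(G, \col)$, which yields $k \le k'$.

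No genuine obstacle is anticipated: the whole argument is essentially a pigeonhole on how few vertices a short flooding sequence can touch, fed into Lemma~\ref{lem:unplayed-false-twins} in both directions. The only point requiring care is that we cannot in general assume the optimal sequence for $G$ avoids $x$ itself, which is exactly why the isomorphism step (relocating the removed vertex from $x$ to an unplayed twin $y_{1}$) is needed in the first direction rather than trying to invoke the lemma with $x$ fixed.
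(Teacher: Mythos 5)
Your proposal is correct and follows essentially the same route as the paper: both directions are obtained by a pigeonhole argument guaranteeing unplayed false twins and then invoking Lemma~\ref{lem:unplayed-false-twins}, with the relocation from $G-y_1$ to $G-x$ handled by the color-preserving twin-swap isomorphism (which the paper leaves implicit by noting it suffices to prove the inequality for \emph{some} $x\in S$). The only cosmetic difference is that you derive the bound $|S\setminus\{x\}|\ge k'+1$ from the already-proved inequality $k'\le k$, whereas the paper uses the self-contained ``otherwise we are done'' case split; both are valid.
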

\begin{proof}
\label{lem:huge-colored-false-twin-class}
We first show that $\optfree(G,\col) \le \optfree(G-x,\col|_{G-x})$ for every $x \in S$.
Let $(u_{1}, c_{1}), \dots, (u_{k}, c_{k})$ be an optimal valid flooding sequence for $(G - x, \col|_{G-x})$.
Assume that $\optfree(G,\col) \ge k$ (otherwise we are done).
Since $|S \setminus \{x\}| \ge \optfree(G,\col) + 1 \ge k+1$,
there is a vertex $y \in S \setminus \{x\}$ such that $u_{i} \ne y$ for $1 \le i \le k$.
By Lemma~\ref{lem:unplayed-false-twins}, $(u_{1}, c_{1}), \dots, (u_{k}, c_{k})$ is valid for $(G,\col)$ as well.

Next we show the other direction.
Since $S$ is a set of monochromatic false-twin vertices,
it suffices to show that $\optfree(G,\col) \ge \optfree(G-x,\col|_{G-x})$ for some $x \in S$.
Let $(u_{1},c_{1}), \dots, (u_{k},c_{k})$ be an optimal valid flooding sequence of $(G, \col)$.
Since $|S| \ge \optfree(G,\col) + 2 = k + 2$,
there are two vertices $x, y \in S$ such that $u_{i} \notin \{x,y\}$ for $1 \le i \le k$.
By Lemma~\ref{lem:unplayed-false-twins}, $(u_{1}, c_{1}), \dots, (u_{k}, c_{k})$ is valid for $(G-x,\col|_{G-x})$ as well.
\end{proof}

\begin{corollary}
Rule FT is safe.
\end{corollary}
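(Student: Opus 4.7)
My plan is to deduce the corollary directly from Lemma~\ref{lem:huge-colored-false-twin-class} with $S := F$. That lemma guarantees that removing one vertex from a monochromatic set of false twins preserves $\optfree$ whenever $|S| \ge \optfree(G,\col) + 2$. Since Rule~FT fires with $|F| = \nd(G) + c_{\max}$, the entire corollary reduces to establishing the uniform estimate
\[
    \optfree(G,\col) \le \nd(G) + c_{\max} - 2.
\]

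To prove this estimate I would reuse the flooding strategy already spelled out in the proof of Corollary~\ref{cor:mwd+steps}: fix a vertex $v$ in one of the top-level modules of a modular decomposition, spend at most $\mw(G) - 1$ moves to pull a single representative of every other module into color $\col(v)$, and finally cycle $v$ through the remaining $c_{\max} - 1$ colors, for a total of $\mw(G) + c_{\max} - 2$ moves. Because the preliminaries record $\mw(G) \le \nd(G)$, this gives precisely the bound I need.

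With the inequality in place, Lemma~\ref{lem:huge-colored-false-twin-class} applied to $S := F$ immediately yields $\optfree(G,\col) = \optfree(G-x,\col|_{G-x})$ for every $x \in F$, which is exactly the safeness statement for Rule~FT.

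The delicate point is really the flooding bound rather than the final algebra. One has to verify that the representative-pulling step can always be completed within the $\mw(G) - 1$-move budget even when the quotient graph of top-level modules is not complete, and that the accounting survives if the input graph happens to be disconnected. The regular structure of the twin-class decomposition (each class is either a clique or an independent set sharing a fixed external neighborhood) should make both checks routine, but they must be spelled out carefully.
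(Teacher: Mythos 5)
Your proposal is correct and follows essentially the same route as the paper: both reduce Rule~FT to Lemma~\ref{lem:huge-colored-false-twin-class} via the bound $\optfree(G,\col) \le \nd(G) + c_{\max} - 2$, established by recoloring one representative per part to a common color and then cycling through the remaining colors. The only cosmetic difference is that the paper proves this bound directly on the twin-class partition (for connected $G$), whereas you route it through $\mw(G) \le \nd(G)$ and the argument inside Corollary~\ref{cor:mwd+steps} — the underlying flooding strategy is identical, and the checks you flag at the end are exactly the ones the paper's direct twin-class version disposes of.
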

\begin{proof}
Let $G$ be a connected graph
and $\col$ a coloring of $G$ with $c_{\max}$ colors.
Observe that $(G,\col)$ admits a flooding sequence of length $\nd(G) + c_{\max} - 2$ as follows.
Let $T$ be a maximal set of twin vertices of $G$ and $c$ be a color used in $T$.
For each maximal set of twin vertices $T' \ne T$ of $G$, pick a vertex $u \in T'$ and play the move $(u,c)$.
After these $\nd(G)-1$ moves, the vertices of color $c$ form a connected dominating set of $G$.
Now pick a vertex $v$ of color $c$ and play the move $(v,c')$ for each $c' \in [c_{\max}] \setminus \{c\}$.
These $c_{\max} -1$ moves make the coloring constant.

Now for some color class $C$ and a false-twin class $I$ of $(G,\col)$,
if $|C \cap I| \ge c_{\max} + \nd(G)$, then we can remove an arbitrary vertex in $C \cap I$
while preserving the optimal number of steps by Lemma~\ref{lem:huge-colored-false-twin-class}.
This implies the safeness of FT.
\end{proof}

Note that using the concept of twin colors,
Fellows et al.~\cite{FellowsPRSS17} further reduced the number of colors in instances of {\fixed}
and obtained a (nonpolynomial-size) kernel parameterized by the neighborhood diversity.
They say that two colors are \emph{twin} if the colors appear in the same family of the maximal sets of twin vertices.
They observed that, in {\fixed}, removing one of twin colors reduced the fewest number of moves exactly by 1.
Unfortunately, this is not the case for {\free}. See~\figref{fig:twin-colors}.


\section{Relation Between Fixed and Free Flood-It}
\label{sec:fixed<=2free}
The main theorem of this section is the following:

\begin{theorem}\label{thm:ineq}

For any graph $G=(V,E)$, coloring function $\col$ on $G$, and $p\in V$ we have
\[
\optfree(G,\col) \le \optfixed(G,\col,p) \le 2\optfree(G,\col).  
\]

\end{theorem}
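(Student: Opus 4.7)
The plan is to prove the two inequalities separately. The left inequality $\optfree(G,\col) \le \optfixed(G,\col,p)$ is immediate: any fixed solution is also a valid free solution, since playing at $p$ is a special case of choosing an arbitrary vertex, so the minimum over the richer move set of \free\ can only be smaller.

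For the right inequality I would argue by induction on $k = \optfree(G,\col)$, converting an optimal free solution $\sigma = (u_1,c_1),\ldots,(u_k,c_k)$ into a fixed solution of length at most $2k$; the base case $k=0$ is trivial. The inductive step splits on the first move of $\sigma$. In the easy case $u_1 \in \comp(\col,p)$, the move $(u_1,c_1)$ has exactly the same effect as $(p,c_1)$, so I play $(p,c_1)$ as the first fixed move; the suffix $(u_2,c_2),\ldots,(u_k,c_k)$ still floods the resulting coloring $\col_1$, so $\optfree(G,\col_1) \le k-1$ and induction supplies at most $2(k-1)$ further fixed moves, for a total of $2k-1$. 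In the harder case $u_1 \notin \comp(\col,p)$, I plan to spend two fixed moves, chosen as $(p,\col(u_1))$ followed by $(p,c_1)$: the first shifts the pivot's component into the color of $\comp(\col,u_1)$ (absorbing it when the two are adjacent), and the second performs the same color change $\col(u_1) \to c_1$ that the free move made. I then aim to show that the resulting coloring $\col''$ satisfies $\optfree(G,\col'') \le k-1$, so that induction yields $2 + 2(k-1) = 2k$ fixed moves in total.

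The main obstacle is this harder case. When $\comp(\col,u_1)$ is not adjacent to $\comp(\col,p)$, the two fixed moves do not actually absorb $\comp(\col,u_1)$ into the pivot's component, and the fixed coloring $\col''$ can differ from the free coloring $\col_1$ outside $p$'s component, so the free suffix cannot be reused verbatim. The envisioned resolution is to maintain throughout the induction a stronger invariant tying the fixed and free runs together — for example, that the pivot's fixed component always contains the pivot's free component together with every vertex whose color has already been altered by the free solution — and to verify case by case that each free move can be matched by at most two fixed moves while preserving this invariant. The tight ratio-$2$ examples discussed later in the paper suggest that two fixed moves per free move is unavoidable in the worst case, so some such amortized case analysis is essentially forced.
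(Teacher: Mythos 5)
Your left inequality and your accounting scheme (one fixed move for a free move played inside the pivot's component, two for any other free move) match the paper's, but the inductive step in your hard case has a genuine gap that your own sketch does not close. When $u_1\notin\comp(\col,p)$ and $\comp(\col,u_1)$ is not adjacent to $\comp(\col,p)$, the pair $(p,\col(u_1)),(p,c_1)$ produces a coloring $\col''$ that bears no useful relation to $\col_1$, and the claim $\optfree(G,\col'')\le k-1$ is exactly what needs proof; it cannot be waved through, because \free\ is \emph{not} monotone (Section~\ref{sec:non-monotonicity} exhibits a single move that strictly increases $\optfree$), so inserting extra pivot moves can in principle make the residual free optimum go \emph{up}. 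The stronger invariant you propose --- that the pivot's fixed component always contains every vertex whose color the free solution has already altered --- is not maintainable: the free solution may begin by recoloring vertices arbitrarily far from $p$ (e.g.\ at the far end of a long path with $p$ an endpoint), while after two pivot moves the fixed component is still confined near $p$.

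The paper escapes this by \emph{not} simulating the free solution move-by-move in its given order. Instead it reorders: it looks for the first move that recolors a neighbor $u$ of the current $\comp(\col_{i-1},p)$ to $p$'s current color, and replaces that one bad move by two pivot moves (out to $\col_{i-1}(u)$ and back); if no such move exists, it takes the first move that changes $p$'s color and commutes it to the front of the sequence. To control the unintended recolorings caused by this reordering, it introduces a relaxed game (\sfree/\sfixed) whose moves may recolor a non-maximal monochromatic set containing $p$, together with Lemma~\ref{lem:subset} showing $\optfixed=\optsfixed$; the induction is then on $|\mathcal{S}|+2b(\mathcal{S})$, where $b(\mathcal{S})$ counts the moves avoiding $p$, rather than on the number of free moves. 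Some device of this kind --- a reordering argument plus a mechanism for keeping the two runs synchronized --- is the missing content of your proof; as written, your induction does not go through.
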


Theorem \ref{thm:ineq} states that the optimal solutions for \free\ and \fixed\
can never be more than a factor of $2$ apart. It is worthy of note that we
could not hope to obtain a constant smaller than $2$ in such a theorem, and
hence the theorem is tight.

\begin{theorem} There exist instances of \fixed\ such that $\optfixed(G,\col,p)
= 2\optfree(G,\col)$ \end{theorem}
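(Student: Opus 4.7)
The plan is to exhibit, for every positive integer $k$, an explicit instance on which the ratio $\optfixed/\optfree$ is exactly $2$. The simplest candidate is the path $P_{2k+1}$ on vertices $v_1,v_2,\ldots,v_{2k+1}$, with the alternating $2$-coloring $\col(v_i)=1$ if $i$ is odd and $\col(v_i)=2$ if $i$ is even, and with pivot $p=v_1$. I will then show $\optfree(G,\col)=k$ and $\optfixed(G,\col,v_1)=2k$.

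The upper bounds are the easy direction. For \free, I would play the $k$ moves $(v_{2i},1)$ for $i=1,\ldots,k$ in this order; at the moment we play $v_{2i}$ its monochromatic component is $\{v_{2i}\}$, so this move merges $v_{2i}$ with its color-$1$ neighbors and after $k$ moves the whole path is color $1$. For \fixed, I would simply alternate colors at the pivot: playing colors $2,1,2,1,\ldots,1$ for a total of $2k$ moves, each move absorbs exactly the next vertex on the path into the pivot's component.

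The lower bounds are the real content. For \fixed, I would argue by induction or direct counting that, because the pivot's monochromatic component is always a prefix $\{v_1,\ldots,v_j\}$ of the path, any single move increases $j$ by at most one (the component has only one boundary vertex, whose color determines the unique useful move). Since we must reach $j=2k+1$ starting from $j=1$, at least $2k$ moves are required. For \free, I would count monochromatic components: initially every vertex forms its own component, so the instance has $2k+1$ monochromatic components, while a flooded graph has $1$. Because the underlying graph is a path, every vertex has degree at most $2$, so a single move played at any vertex merges its component with at most $2$ adjacent components of the chosen color and hence decreases the number of monochromatic components by at most $2$. Thus at least $\lceil 2k/2\rceil=k$ moves are needed.

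The main obstacle is the lower bound $\optfree\ge k$, because a priori \free\ is allowed to play anywhere and it is not obvious that it cannot do better than halving the number of components per move; however, the degree-$2$ structure of the path pins this argument down cleanly. Once the two lower bounds are combined with the two matching upper bounds above, we obtain $\optfixed(G,\col,v_1)=2k=2\,\optfree(G,\col)$, which proves the theorem and simultaneously shows that the factor $2$ in Theorem~\ref{thm:ineq} cannot be improved.
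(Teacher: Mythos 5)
Your proposal is correct and uses exactly the construction in the paper's proof: a path on $2k+1$ vertices, properly $2$-colored, with the pivot at an endpoint, giving $\optfixed=2k$ and $\optfree=k$. You additionally spell out the two lower bounds (the pivot's component grows by at most one vertex per move, and each \free{} move on a path merges at most two adjacent monochromatic components), which the paper leaves implicit.
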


\begin{proof} Consider a path on $2n+1$ vertices properly colored with colors
$1,2$. If we set the pivot to be one of the endpoints then $\optfree=2n$.
However, it is not hard to obtain a \free\ solution with $n$ moves by playing
every vertex at odd distance from the pivot.  \end{proof}

Before we proceed to give the proof of Theorem \ref{thm:ineq}, let us give a
high-level description of our proof strategy and some general intuition. The
first inequality is of course trivial, so we focus on the second part. We will
establish it by induction on the number of non-pivot moves performed by an
optimal \free\ solution. The main inductive argument is based on observing that
a valid \free\ solution will either at some point play a neighbor $u$ of the
component of $p$ to give it the same color as $p$, or if not, it will at some
point play $p$ to give it the same color as one of its neighbors.  The latter
case is intuitively easier to handle, since then we argue that the move that
changed $p$'s color can be performed first, and if the first move is a pivot
move we can easily fall back on the inductive hypothesis.  The former case,
which is the more interesting one, can be handled by replacing the single move
that gives $u$ the same color as $p$, with two moves: one that gives $p$ the
same color as $u$, and one that flips $p$ back to its previous color.
Intuitively, this basic step is the reason we obtain a factor of $2$ in the
relationship between the two versions of the game.

The inductive strategy described above faces some complications due to the fact
that rearranging moves in this way may unintentionally re-color some vertices,
which makes it harder to continue the rest of the solution as before. To avoid
this we define a somewhat generalized version of \free, called \sfree.

\begin{definition}

Given $G=(V,E)$, a coloring function $\col$ on $G$, and a pivot $p\in V$, a
\emph{set-move} is a pair $(S,c)$, with $S\subseteq V$ and $S=\comp(\col,u)$
for some $u\in V$, or $\{p\}\subseteq S\subseteq \comp(\col,p)$.  The result of
$(S,c)$ is the coloring $\col'$ that sets $\col'(v)=c$ for $v\in S$; and
$\col'(v)=\col(v)$ otherwise.

\end{definition}

We define \sfree\ as the problem of determining the minimum number of set-moves
required to make a graph monochromatic, and \sfixed\ as the same problem when
we impose the restriction that every move must change the color of $p$, and
denote as $\optsfree, \optsfixed$ the corresponding optimum values.

Informally, a set-move is the same as a normal move in \free, except that we
are also allowed to select an arbitrary connected monochromatic set $S$ that
contains $p$ (even if $S$ is not maximal) and change its color. Intuitively,
one would expect moves that set $S$ to be a proper subset of $\comp(\col,p)$ to
be counter-productive, since such moves split a monochromatic component into
two pieces. Indeed, we prove below in Lemma \ref{lem:subset} that the optimal
solutions to \fixed\ and \sfixed\ coincide, and hence such moves do not help.
The reason we define this version of the game is that it gives us more freedom
to define a solution that avoids unintentionally recoloring vertices as we
transform a given \free\ solution to a \fixed\ solution.

\begin{lemma}\label{lem:subset}

For any graph $G=(V,E)$, coloring function $\col$ on $G$, and pivot $p\in V$ we
have $\optfixed(G,\col,p) = \optsfixed(G,\col,p)$.

\end{lemma}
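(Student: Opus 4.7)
My plan is to prove the two inequalities separately. The direction $\optsfixed(G,\col,p) \le \optfixed(G,\col,p)$ is immediate, since any {\fixed} move playing color $c$ on the pivot can be encoded verbatim as the set-move $(\comp(\col,p), c)$, which is valid by the definition. So the substantive content is the reverse inequality $\optfixed(G,\col,p) \le \optsfixed(G,\col,p)$.

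For this direction, I would take an optimal {\sfixed} solution $\sigma = (S_1, c_1), \ldots, (S_k, c_k)$ producing colorings $\col_0,\ldots,\col_k$, and define a {\fixed} sequence $\tau$ that simply plays the colors $c_1,\ldots,c_k$ at the pivot, producing colorings $\col'_0,\ldots,\col'_k$. The intuition is that shrinking a pivot move to a proper subset of the pivot's component can only leave behind extra vertices of the old color that would have to be reabsorbed later, so the {\fixed} coloring should be ``at least as flooded'' as the {\sfixed} one at every step.

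I would formalize this by proving, by induction on $i$, the following invariant:
\begin{enumerate}
\item[(a)] $\col'_i(p) = \col_i(p)$;
\item[(b)] $\comp(\col'_i, p) \supseteq \comp(\col_i, p)$;
\item[(c)] $\col'_i(v) = \col_i(v)$ for every $v \notin \comp(\col'_i, p)$.
\end{enumerate}
The base case is trivial. For the inductive step, {\sfixed} recolors only $S_i$, while {\fixed} recolors the whole of $\comp(\col'_{i-1}, p)$, which by inductive (b) contains $\comp(\col_{i-1}, p) \supseteq S_i$. Part (a) is immediate, since both pivots receive $c_i$. For (c), any $v \notin \comp(\col'_i, p)$ lies outside $\comp(\col'_{i-1}, p)$, so neither move touches $v$ (the {\sfixed} one because $S_i \subseteq \comp(\col'_{i-1}, p)$), and the previous step's (c) gives $\col'_{i-1}(v) = \col_{i-1}(v)$. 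For (b), I would argue that any vertex $v$ with $\col_i(v) = c_i$ either lies in $S_i \subseteq \comp(\col'_{i-1}, p)$, and hence satisfies $\col'_i(v) = c_i$, or else had $\col_{i-1}(v) = c_i$; in the latter case a case split on whether $v \in \comp(\col'_{i-1}, p)$, combined with inductive (c), again gives $\col'_i(v) = c_i$. Consequently any color-$c_i$ path from $p$ in $\col_i$ survives in $\col'_i$, which yields (b).

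Once the invariant is established at $i = k$, monochromaticity transfers: if $\col_k$ is constant, each vertex $v$ is either inside $\comp(\col'_k, p)$ (where it shares the pivot's color) or outside, where by (c) it equals its value in the monochromatic $\col_k$. The main obstacle I anticipate is (b): one must rule out the worry that a set-move leaving a splinter of the old pivot color outside $S_i$ could later merge with a useful $c_i$-component that the corresponding full {\fixed} move has already consumed. The agreement condition (c) at the previous step is exactly what forecloses this: outside the larger {\fixed} component the two colorings coincide, so nothing advantageous can happen there in {\sfixed} that does not also happen in {\fixed}.
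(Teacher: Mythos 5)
Your proposal is correct and follows essentially the same route as the paper's own proof: the trivial direction is handled identically, and for the converse the paper establishes by induction exactly your invariants (b) and (c) (your (a) is a harmless redundant addition), with the same argument that color-$c_i$ vertices of $\col_i$ retain color $c_i$ in $\col'_i$ either because they lie in the recolored pivot component or because the colorings agree outside it.
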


\begin{proof}

First, observe that $\optsfixed(G,\col,p)\le \optfixed(G,\col,p)$ is trivial,
as any solution of \fixed\ is a solution to \sfixed\ by playing the same
sequence of colors and always selecting all of the connected monochromatic
component of $p$.  

Let us also establish the converse inequality.  Consider a solution $(S_1,c_1),
(S_2,c_2), \ldots, (S_k,c_k)$ of \sfixed, where by definition we have $p\in
S_i$ for all $i\in [k]$. We would like to prove that $(p,c_1), (p,c_2), \ldots,
(p,c_k)$ is a valid solution for \fixed. Let $\col_i$ be the result of the
first $i$ set-moves of the former solution, and $\col'_i$ be the result of the
first $i$ moves of the latter solution. We will establish by induction the
following:

\begin{enumerate}

\item For all $i\in [k]$ we have $\comp(\col_i,p)\subseteq \comp(\col_i',p)$.

\item For all $i\in [k], u\in V\setminus \comp(\col'_i,p)$ we have
$\col_i(u)=\col'_i(u)$.

\end{enumerate}

The statements are true for $i=0$. Suppose that the two statements are true
after $i-1$ moves.  The first solution now performs the set-move $(S_i,c_i)$
with $S_i\subseteq \comp(\col_{i-1},p) \subseteq \comp(\col'_{i-1},p)$. We now
have that $\comp(\col_i,p)$ contains $S_i$ plus the neighbors of $S_i$ which
have color $c_i$ in $\col_{i-1}$. Such vertices either also have color $c_i$ in
$\col'_{i-1}$, or are contained in $\comp(\col'_{i-1},p)$; in both cases they
are included in $\comp(\col'_i,p)$, which establishes the first condition. To
see that the second condition continues to hold observe that every vertex for
which $\col_{i-1}(u)\neq \col_i(u)$ or $\col'_{i-1}(u)\neq\col'_i(u)$ belongs
in $\comp(\col'_i,p)$; the colors of other vertices remain unchanged.  Since in
the end $\comp(\col_k,p)=V$ the first condition ensures that
$\comp(\col'_k,p)=V$.  \end{proof}

We are now ready to state the proof of Theorem \ref{thm:ineq}.

\begin{proof}[Proof of Theorem \ref{thm:ineq}]

As mentioned, we focus on proving the second inequality as the first inequality
follows trivially from the definition of the problems. Given a graph $G=(V,E)$,
an initial coloring function $\col=\col_0$, and a pivot $p\in V$, we suppose we
have a solution to \free\ $(u_1,c_1), (u_2,c_2), \ldots, (u_k,c_k)$. In the
remainder, we denote by $\col_i$ the coloring that results after the moves
$(u_1,c_1),\ldots,(u_i,c_i)$.   We can immediately construct an equivalent
solution to \sfree\ from this, producing the same sequence of colorings:
$(\comp(\col_0,u_1),c_1), (\comp(\col_1,u_2),c_2),\ldots,
(\comp(\col_{k-1},u_k),c_k)$. We will transform this solution to a solution of
$\sfixed$ of length at most $2k$, and then invoke Lemma \ref{lem:subset} to
obtain a solution for \fixed\ of length at most $2k$.  More precisely, we will
show that for any $G,\col, p$ we have $\optsfixed(G,\col,p) \le 2
\optsfree(G,\col,p)$.

For a solution $\mathcal{S}= (S_1,c_1), (S_2,c_2),\ldots, (S_k,c_k)$ to \sfree\
we define the number of bad moves of $\mathcal{S}$ as $b(\mathcal{S})=|\{
(S_i,c_i)\ |\ p\not\in S_i\}|$. We will somewhat more strongly prove the
following statement for all $G,\col,p$: for any valid \sfree\ solution
$\mathcal{S}$, we have $$\optsfixed(G,\col,p) \le |\mathcal{S}| +
b(\mathcal{S})$$ 

Since $|\mathcal{S}| + b(\mathcal{S}) \le 2|\mathcal{S}|$, the above statement
will imply the promised inequality and the theorem. 

We prove the statement by induction on $|\mathcal{S}|+2b(\mathcal{S})$. If
$|\mathcal{S}|+2b(\mathcal{S}) \le 2$ then $\mathcal{S}$ is already a \sfixed\
solution, so the statement is trivial. Suppose then that the statement holds
when $|\mathcal{S}|+2b(\mathcal{S})\le n$ and we have a solution $\mathcal{S}$
with $|\mathcal{S}|+2b(\mathcal{S})=n+1$. We consider the following cases:

$\bullet$ The first move $(S_1,c_1)$ has $p\in S_1$. By the inductive
hypothesis there is a \sfixed\ solution of length at most
$|\mathcal{S}|+b(\mathcal{S})-1$ for $(G,\col_1,p)$. We build a solution for
\sfixed\ by appending this solution to the move $(S_1,c_1)$, since this is a
valid move for \sfixed.

$\bullet$ There exists a move $(S_i,c_i)$ with $S_i=\comp(\col_{i-1},u)$, for
some $u\in N(\comp(\col_{i-1},p))\setminus \comp(\col_{i-1},p)$ such that
$c_i=\col_{i-1}(p)$. That is, there exists a move that plays a vertex $u$ that
currently has a different color than $p$, and as a result of this move the
component of $u$ and $p$ merge, because $u$ receives the same color as $p$ and
$u$ has a neighbor in the component of $p$.  

Consider the first such move.  We build a solution $\mathcal{S'}$ as follows:
we keep moves $(S_1,c_1)\ldots (S_{i-1},c_{i-1})$; we add the moves
$(\comp(\col_{i-1},p),\col_{i-1}(u)), (\comp(\col_{i-1},p)\cup
\comp(\col_{i-1},u), \col_{i-1}(p))$; we append the rest of the previous
solution $(S_{i+1},c_{i+1}),\ldots$.

To see that $\mathcal{S'}$ is still a valid solution we observe that
$\comp(\col_{i-1},p)\cup \comp(\col_{i-1},u)$ is monochromatic and connected
when we play it, and that the result of the first $i-1$ moves, plus the two new
moves is exactly $\col_i$. We also note that
$\mathcal{S'}+b(\mathcal{S'})=\mathcal{S}+b(\mathcal{S})$ because we replaced
one bad move with two good moves. However,
$\mathcal{S'}+2b(\mathcal{S'})<\mathcal{S}+2b(\mathcal{S})$, hence by the
inductive hypothesis there exists a \sfixed\ solution of the desired length.

$\bullet$ There does not exist a move as specified in the previous case. We
then show that this reduces to the first case. If no move as described in the
previous case exists and the initial coloring is not already constant,
$\mathcal{S}$ must have a move $(S_i,c_i)$ where $\{p\}\subseteq S_i\subseteq
\comp(\col_0,p)$ and $c_i = \col_{i-1}(u)$ for $u\in
N(\comp(\col_0,p))\setminus \comp(\col_0,p)$. In other words, this is a good
move (it changes the color of $p$), that adds a new vertex $u$ to the connected
monochromatic component of $p$. Such a move must exist, since if the initial
coloring is not constant, the initial component of $p$ must be extended, and we
assumed that no move that extends it by recoloring one of its neighbors exists.

Consider the first such good move  $(S_i,c_i)$ as described above. We build a
solution $\mathcal{S'}$ as follows: the first move is
$(\comp(\col_0,p),\col_0(u))$, where $u$ is, as described above, the neighbor
of $\comp(\col_0,p)$ with $\col_{i-1}(u)=c_i$.  For $j\in [i-1]$ we add the
move $(S_j,c_j)$ if $u\not\in S_j$, or the move
$(\comp(\col_{j-1},u)\cup\comp(\col_0,p),c_j)$ if $u\in S_j$. In other words,
we keep other moves unchanged if they do not affect $u$, otherwise we add to
them $\comp(\col_0,p)$. We observe that these moves are valid since we maintain
the invariant that $\comp(\col_0,p)$ and $u$ have the same color and since none
of the first $i-1$ moves of $\mathcal{S}$ changes the color of $p$ (since we
selected the first such move).  The result of these $i$ moves is exactly
$\col_i$. We now append the remaining move $(S_{i+1},c_{i+1}),\ldots$, and we
have a solution that starts with a good move, has the same length and the same
(or smaller) number of bad moves as $\mathcal{S}$ and is still valid.  We have
therefore reduced this to the first case.  \end{proof}

As we mentioned before, this combinatorial theorem implies $2$-approximability of {\free}
for the cases where {\fixed} is polynomial-time solvable.
Also, as {\fixed} admits a $(c_{\max}-1)$ approximation~\cite{CliffordJMS12},\footnote{%
Their proof was only for grids, but it just works for the general case.}
we have a $2(c_{\max}-1)$ approximation for {\free}.
\begin{corollary}
{\free} admits a $2(c_{\max}-1)$ approximation, where $c_{\max}$ is the number of used colors.
\end{corollary}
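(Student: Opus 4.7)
The plan is to combine the two results directly invoked in the corollary's statement. First I would pick any pivot vertex $p \in V(G)$ and run the known $(c_{\max}-1)$-approximation algorithm of Clifford et al.~\cite{CliffordJMS12} on the instance $(G,\col,p)$ of \fixed. This produces, in polynomial time, a sequence of pivot moves of length at most $(c_{\max}-1)\cdot\optfixed(G,\col,p)$ that makes the graph monochromatic.

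Next I would observe that any valid sequence of moves for \fixed\ is, by definition, also a valid sequence of moves for \free: restricting the played vertex to be $p$ at every step is simply a special case of choosing an arbitrary vertex at every step, and the flooding condition at the end is identical. Hence the same sequence is a feasible \free\ solution, with length at most $(c_{\max}-1)\cdot\optfixed(G,\col,p)$.

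Finally I would apply Theorem~\ref{thm:ineq}, which tells us that $\optfixed(G,\col,p) \le 2\,\optfree(G,\col)$ for every choice of pivot $p$. Chaining these inequalities yields a \free\ solution of length at most $2(c_{\max}-1)\cdot\optfree(G,\col)$, which is the desired approximation ratio. There is no real obstacle here; the only thing worth double-checking is that the \fixed\ approximation algorithm of Clifford et al.\ indeed applies for arbitrary graphs (the footnote in the paper already notes that, although originally stated for grids, their proof goes through in general), and that the transformation to a \free\ solution introduces no extra moves — which is immediate since we are merely reinterpreting the same sequence under a less restrictive rule.
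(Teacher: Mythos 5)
Your argument is correct and is exactly the one the paper uses (the corollary's justification is the sentence preceding it: apply the known $(c_{\max}-1)$-approximation for \fixed\ with an arbitrary pivot, observe that any \fixed\ solution is a feasible \free\ solution, and chain with $\optfixed(G,\col,p) \le 2\,\optfree(G,\col)$ from Theorem~\ref{thm:ineq}). You have merely spelled out the steps the paper leaves implicit, including the caveat from the footnote about extending Clifford et al.\ beyond grids.
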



\section{Non-monotonicity of {\free}}
\label{sec:non-monotonicity}
We now consider the (non-)monotonicity of the problem.
A game has the \emph{monotonicity property} if no legal move makes the situation worse.
That is, if {\fixed} (or {\free}) has the monotonicity property,
then no single move increases the minimum number of steps to make the input graph monotone.
We believe that the monotonicity of {\fixed} was known as folklore
and used implicitly in the literature.
On the other hand, we are not sure that 
the non-monotonicity of {\free} was widely known.
The only result we are aware of is by Meeks and Scott~\cite{MeeksS12}
who showed that on paths {\free} has the monotonicity property.
In the following, 
we show that {\free} loses its monotonicity property
as soon as the underlying graph becomes a path with one attached vertex.

To be self-contained, we start with proving the following folklore,
which says that {\fixed} is monotone.
\begin{lemma}
Let $(G,\col)$ is an instance of {\fixed} with pivot $p$.
For every color $c$, it holds that $\optfixed(G,\col') \le \optfixed(G,\col)$,
where $(G,\col')$ is the result of the move $(p,c)$.
\end{lemma}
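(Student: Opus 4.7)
The plan is to take any optimal flooding sequence $(p, c_1), (p, c_2), \ldots, (p, c_k)$ of length $k = \optfixed(G,\col,p)$ for the instance $(G,\col)$ and show that the very same sequence of color choices, played from $\col'$, still makes the graph monochromatic. This immediately yields $\optfixed(G,\col',p) \le k$. Write $\col_i$ and $\col'_i$ for the colorings obtained by applying the first $i$ moves of the sequence to $\col$ and to $\col'$ respectively, so $\col_0 = \col$ and $\col'_0 = \col'$.

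The heart of the argument will be an induction on $i$ establishing the two-part invariant
\textbf{(1)} $\comp(\col_i, p) \subseteq \comp(\col'_i, p)$, and
\textbf{(2)} for every $v \notin \comp(\col'_i, p)$, $\col_i(v) = \col'_i(v)$.
For the base case $i = 0$, I would observe that the single move $(p,c)$ only recolors the vertices of $\comp(\col,p)$, possibly merging that component with adjacent components of color $c$; hence $\comp(\col,p) \subseteq \comp(\col',p)$, and every vertex outside $\comp(\col',p)$ retains its original color. For the inductive step, after playing $(p,c_{i+1})$, I want to show that every vertex on a monochromatic $c_{i+1}$-path from $p$ in $\col_i$ lies on a monochromatic $c_{i+1}$-path from $p$ in $\col'_i$. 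Such a vertex $v$ is either in $\comp(\col'_{i-1},p)$, in which case the $(p,c_{i+1})$ move recolors it to $c_{i+1}$ in $\col'_i$ as well, or it lies outside $\comp(\col'_{i-1},p)$, in which case part~(2) of the invariant at step $i-1$ gives $\col_{i-1}(v) = \col'_{i-1}(v)$ (and neither move changes this color), so $v$ again has color $c_{i+1}$ in $\col'_i$. Part~(2) at step $i$ is verified by a symmetric bookkeeping: if $v \notin \comp(\col'_i,p) \supseteq \comp(\col'_{i-1},p)$, then neither game changes $v$'s color in the $i$-th step, and the previous invariant applies.

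The conclusion is then immediate: optimality of the original sequence gives $\comp(\col_k,p) = V$, and by part~(1) we get $\comp(\col'_k,p) = V$, so $\col'_k$ is constant. Thus $(p, c_1), \ldots, (p, c_k)$ is a valid \fixed\ sequence for $(G,\col',p)$, establishing $\optfixed(G, \col', p) \le \optfixed(G, \col, p)$. I expect the main obstacle to be purely bookkeeping in the inductive step: part~(2) of the invariant (color agreement off the primed pivot component) looks redundant at first, but without it the inductive step would fail whenever a later move involves a component that was touched by the initial $(p,c)$ move. Stating the two-part invariant upfront is what makes the induction go through cleanly, and it also painlessly covers the degenerate cases $c = \col(p)$ (where $\col' = \col$) and $c$ not appearing among the neighbors (where $\comp(\col',p) = \comp(\col,p)$).
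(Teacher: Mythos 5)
Your proposal is correct and follows essentially the same route as the paper: replay the identical sequence of pivot moves and prove by induction that $\comp(\col_i,p)\subseteq\comp(\col'_i,p)$, with your part~(2) playing the role of the paper's observation that all colors and components outside the pivot's component are untouched by pivot moves. The only blemish is a harmless off-by-one in the indices of the inductive step (you write $\col'_{i-1}$ where $\col'_i$ is meant).
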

\begin{proof}
Let $(p, c_{1}), \dots, (p, c_{k})$ be an optimal solution for $(G,\col)$.
We show that this sequence is valid for $(G,\col')$ too.
Let $\col = \col_{0}$ and for $i \ge 1$, let $\col_{i}$ be the coloring obtained from $\col_{i-1}$ by applying the $i$th move $(p,c_{i})$.
We define $\col'_{i}$ in the analogous way.
Observe that since all moves are played on the pivot $p$,
we have $\comp(\col_{i},v) \in \{\comp(\col,v), \comp(\col_{i},p)\}$
and $\comp(\col'_{i},v) \in \{\comp(\col',v), \comp(\col'_{i},p)\}$
for every $v \in V(G)$ and for every $i$.

It suffices to show that $\comp(\col_{i}, p) \subseteq \comp(\col'_{i}, p)$ for all $i$.
This obviously holds when $i = 0$.
Assume that $\comp(\col_{i}, p) \subseteq \comp(\col'_{i}, p)$ for some $i \ge 0$.
Let $v \in V(G) \setminus \{p\}$ such that $\comp(\col,v)$ is not contained in but adjacent to $\comp(\col_{i},p)$,
and thus contained in $\comp(\col_{i+1},p)$.
Since $\comp(\col_{i}, p) \subseteq \comp(\col'_{i}, p)$, 
the monochromatic component $\comp(\col,v)$ is either contained in or adjacent to $\comp(\col_{i}',p)$.
Therefore, $\comp(\col,v)$ is contained in $\comp(\col_{i+1}',p)$.
This implies that $\comp(\col_{i+1}, p) \subseteq \comp(\col_{i+1}', p)$.
\end{proof}

Meeks and Scott~\cite{MeeksS12} showed the following monotonicity of {\free} on paths.
\begin{proposition}
[{\cite[Lemma~4.1]{MeeksS12}}]
Let $G$ be a path, $\col$ a vertex coloring of $G$,
$(v, c)$ a move, and $(G,\col')$ the result of the move.
Then, $\optfree(G,\col') \le \optfree(G,\col)$.
\end{proposition}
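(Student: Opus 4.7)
The plan is to proceed by induction on $k := \optfree(G, \col)$. The base case $k = 0$ is immediate: if $\col$ is already constant, then $\comp(\col, v) = V(G)$, so applying $(v,c)$ yields a $\col'$ which is also constant. For the inductive step, let $\sigma = (v_1, c_1), \ldots, (v_k, c_k)$ be an optimal flooding sequence for $(G, \col)$, producing intermediate colorings $\col_0 = \col, \col_1, \ldots, \col_k$.

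The structural feature of paths that I plan to exploit is that every monochromatic component is an interval. Writing $C = \comp(\col, v)$, the coloring $\col'$ differs from $\col$ only on $C$ (where the color becomes $c$), and this may cause $C$ to merge with one or both neighboring intervals if they already carry color $c$. It is convenient to represent any coloring of the path by its \emph{run-sequence} $R(\cdot)$, namely the sequence of colors of maximal monochromatic intervals read from one end to the other; consecutive entries of a run-sequence are distinct by definition. Then $R(\col')$ is obtained from $R(\col)$ by a single ``replace-and-collapse'' operation: replace the entry corresponding to $C$ with $c$, then merge adjacent equal entries. The key property of this operation is that it never lengthens the run-sequence.

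The heart of the proof will be to replay the moves of $\sigma$ on $\col'$, producing a sequence $\col'_0 = \col', \col'_1, \ldots, \col'_k$ while possibly omitting moves that have become redundant on the $\col'$ side. I would maintain the invariant that for every $i$, the run-sequence $R(\col'_i)$ is obtained from $R(\col_i)$ by a single replace-and-collapse operation localized at the image of $C$ under the first $i$ moves. Since $\col_k$ is constant, $R(\col_k)$ has length one, and the invariant forces $R(\col'_k)$ to have length at most one, so $\col'_k$ is also constant; the preserved (or shortened) sequence of moves is then a flooding sequence for $(G, \col')$ of length at most $k$.

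The main obstacle will be showing that the invariant is preserved when a move $(v_i, c_i)$ interacts with the merged region, because $\comp(\col'_{i-1}, v_i)$ may strictly contain $\comp(\col_{i-1}, v_i)$, so the move can act on a larger interval on the $\col'$ side. The case analysis splits on whether $v_i$ lies inside the merged block, is adjacent to it, or is separated from it, and on whether $c_i$ matches the surrounding run colors. In each subcase one checks that either the move produces the same replace-and-collapse effect on both sides (and the invariant passes through directly), or its effect on $\col'$ subsumes a later move of $\sigma$ and one of the two moves may be dropped without losing validity. Because replace-and-collapse can only shorten run-sequences, this accounting closes the induction and yields $\optfree(G, \col') \le k$.
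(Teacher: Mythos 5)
The paper itself does not prove this proposition; it is imported verbatim from Meeks and Scott \cite{MeeksS12}, so there is no in-house proof to compare against. Judged on its own merits, your argument has a genuine gap at its central step. The invariant you want to maintain --- that $R(\col'_i)$ is obtained from $R(\col_i)$ by a single replace-and-collapse, preserved by replaying $\sigma$ while only ever \emph{dropping} moves --- cannot be sustained. The uncovered case is a move of $\sigma$ whose effect on the $\col'$ side is \emph{weaker} than on the $\col$ side: the recolored run merges with the modified block under $\col_i$ but fails to merge under $\col'_i$ because that block now carries a different color. After such a move the run-sequence of $\col'_i$ is strictly \emph{longer} than that of $\col_i$, the difference region is a proper sub-interval of a run of $\col_i$ rather than a full run, and no move of $\sigma$ can be dropped to repair this.

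Concretely, take the path $v_1v_2v_3$ with $\col=(1,2,3)$ and the move $(v_2,4)$, so $\col'=(1,4,3)$; both instances have optimum $2$. The sequence $\sigma=(v_3,2),(v_2,1)$ is optimal for $(G,\col)$: it yields $(1,2,2)$ and then, since $\comp(\col_1,v_2)=\{v_2,v_3\}$, the constant coloring $(1,1,1)$. Replayed on $\col'$ it yields $(1,4,2)$ and then $(1,1,2)$, which is not constant, and neither one-move subsequence of $\sigma$ floods $(G,\col')$ either. So the object your argument promises to produce --- a subsequence of $\sigma$ that floods $(G,\col')$ --- simply does not exist for this $\sigma$, even though $\optfree(G,\col')=2$ (e.g.\ $(v_2,1),(v_1,3)$ works). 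A correct proof must be allowed to \emph{alter} the colors or targets of the replayed moves, and controlling that alteration is exactly the hard part of the monotonicity argument that your sketch defers to an unexamined case analysis. A secondary issue: you announce an induction on $k$ but never invoke an inductive hypothesis; as written the argument is a direct replay and the induction is vestigial.
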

One may wonder whether the monotonicity property holds in general for {\free}.
The following example, however, shows that it does not hold even for some graphs very close to paths.
See the two instances in \figref{fig:none-monotone-min}.
The instance $(G,\col')$ is obtained from $(G,\col)$ by playing the move $(v,3)$.
We show that $\optfree(G,\col) < \optfree(G,\col')$.

Observe that $\optfree(G,\col) = 3$: by Lemma~\ref{lem:col}, $\optfree(G,\col) \ge 3$,
and the sequence $(u,2)$, $(u,1)$, $(u,1)$ floods the graph.
Suppose that $\optfree(G,\col') \le 3$.
By Lemma~\ref{lem:col}, the first move in each optimal solution of $(G,\col')$
has to make the subgraph induced by some color connected,
and then the second move has to remove the connected color.
We can see that 2 is the only color that can play this role.
If the first move is not played on $u$,
then it is played on one of the two color-2 vertices
and then the second move is played on the other color-2 vertex.
Such a sequence of two moves cannot make either of color-1 or color-3 vertices connected.
Thus by Lemma~\ref{lem:col}, it still needs at least two moves.
Hence we can conclude that the first move is $(u,2)$.
Now the second move has to remove the color 2,
and thus has to be played on $u$ (or equivalently on any vertex in the monochromatic component including $u$).
No matter which color we choose, we end up with an instance with at least two colors that are not connected.
Again by Lemma~\ref{lem:col}, this instance needs more than one step, and thus $\optfree(G,\col') > 3$.
\begin{figure}[tb]
  \centering
  \includegraphics[scale=0.7]{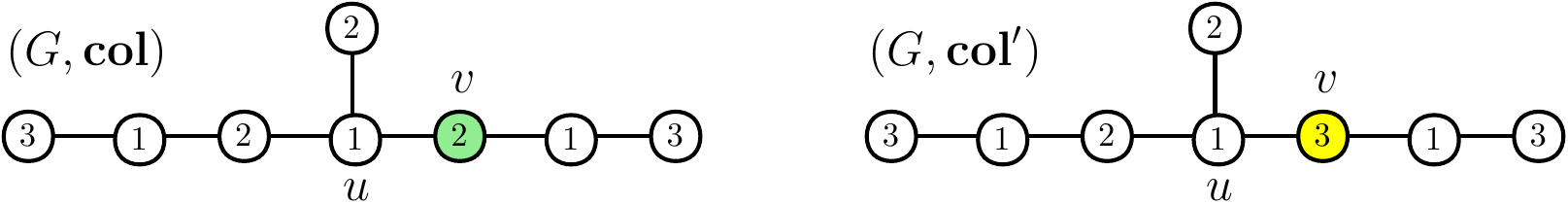}
  \caption{Non-monotonicity of {\free}.}
  \label{fig:none-monotone-min}
\end{figure}







\bibliographystyle{plainurl}
\bibliography{ffip}



\end{document}